\pdfoutput=1
\documentclass[nonacm,acmsmall,final,screen,natbib=false]{acmart}

\usepackage{amsmath, amsthm, amssymb, calc}
\usepackage{mathtools} %
\usepackage{thm-restate}
\usepackage{url}
\usepackage{enumitem}
\usepackage{framed} 
\usepackage[small, bold, full]{complexity} \newclass{\EXPTIME}{EXPTIME}

\usepackage{bm}

\usepackage{dsfont}

\usepackage{multirow}
\usepackage{rotating}

\usepackage{ifdraft}
\usepackage[colorinlistoftodos,prependcaption]{todonotes}

\usepackage{hyperref}

\usepackage[capitalise,nameinlink]{cleveref}

\renewcommand{\S}{\textsection}
\renewcommand{\top}{\ensuremath\mathsf{T}}
\renewcommand{\epsilon}{\ensuremath\varepsilon}

\newcommand{\Id}{\operatorname{Id}}

\newcommand{\Z}{\mathbb{Z}}

\newcommand{\N}{\mathbb{N}}

\newcommand{\Q}{\mathbb{Q}}
 
\newcommand{\F}{\mathbb{F}}

\newcommand{\NN}{\mathbb{N}}
\newcommand{\ZZ}{\mathbb{Z}}
\newcommand{\QQ}{\mathbb{Q}}

\newcommand{\QQbar}{\smash{\overline{\Q}}\vphantom{\Q}}

\newcommand{\FFbar}{\smash{\overline{\F}}\vphantom{\F}}

\newcommand{\diag}[1]{\ensuremath{\Delta(#1)}}
\DeclareMathOperator{\GL}{GL}

\providecommand*{\eu}%
{\ensuremath{\mathrm{e}}}
\providecommand*{\iu}%
{\ensuremath{\mathrm{i}}}

\newcommand{\alg}{\smash{\overline{\Q}}}

\newcommand{\appref}[1]{\hyperref[#1]{Appendix~\ref*{#1}}}
\usepackage[plain]{algorithm}
\usepackage{algorithmicx,algpseudocode}

\usepackage{tabularx}
\usepackage{color, colortbl}
\definecolor{Gray}{gray}{0.9}
\definecolor{LightCyan}{rgb}{0.88,1,1}

\usepackage{libertine}

\RequirePackage[
backend=biber,
datamodel=acmdatamodel,
url=false,
doi=true,
giveninits=true,
style=acmnumeric,
sortcites
]{biblatex}

\DeclareSourcemap{
	\maps[datatype=bibtex]{
		\map{
			\step[fieldset=issn, null]
		}
	}
}

\AtEveryBibitem{
	\clearlist{address}
	\clearfield{isbn}
	\clearfield{issn}
	\clearlist{location}
	\clearfield{month}
	\clearfield{series}
	
}

\addbibresource{refs.bib}  

\AtEndPreamble{%
	\theoremstyle{acmplain}
    
	\newtheorem{claim}[conjecture]{Claim}
	
	\theoremstyle{acmdefinition}
    \newtheorem{remark}[conjecture]{Remark}
}

\begin{document}

 	\author{Rida Ait El Manssour}
 	\orcid{0000-0001-6228-9071}
 	\affiliation{%
 		\institution{University of Oxford}
 		\city{Oxford}
 		\country{United Kingdom}
 	}
 	\email{rida.aitelmanssour@cs.ox.ac.uk}
	
 	\author{George Kenison}
 	\orcid{0000-0002-7661-7061}
 	\affiliation{%
 		\institution{Liverpool John Moores University}
 		\city{Liverpool}
 		\country{United Kingdom}
 	}
 	\email{g.j.kenison@ljmu.ac.uk}
	
 	\author{Mahsa Shirmohammadi}
 	\orcid{0000-0002-7779-2339}
 	\affiliation{%
 		\institution{CNRS - IRIF}
 		\city{Paris}
 		\country{France}
 	}
 	\email{mahsa@irif.fr}
	
 	\author{Anton Varonka}
 	\orcid{0000-0001-5758-0657}
 	\affiliation{%
 		\institution{TU Wien}
 		\city{Vienna}
 		\country{Austria}
 	}
 	\email{anton.varonka@tuwien.ac.at}
 	
 	 	\author{James Worrell}
 	\orcid{0000-0001-8151-2443}
 	\affiliation{%
 		\institution{University of Oxford}
 		\city{Oxford}
 		\country{United Kingdom}
 	}
 	\email{jbw@cs.ox.ac.uk}

	\begin{CCSXML}
		<ccs2012>
		<concept>
		<concept_id>10003752.10003790.10002990</concept_id>
		<concept_desc>Theory of computation~Logic and verification</concept_desc>
		<concept_significance>500</concept_significance>
		</concept>
		<concept>
		<concept_id>10010147.10010148.10010149.10010150</concept_id>
		<concept_desc>Computing methodologies~Algebraic algorithms</concept_desc>
		<concept_significance>500</concept_significance>
		</concept>
		</ccs2012>
       <concept_id>10003752.10010124.10010138.10010139</concept_id>
       <concept_desc>Theory of computation~Invariants</concept_desc>
       <concept_significance>500</concept_significance>
       </concept>
 </ccs2012>
\end{CCSXML}

\ccsdesc[500]{Theory of computation~Invariants}
\ccsdesc[500]{Theory of computation~Logic and verification}
\ccsdesc[500]{Computing methodologies~Algebraic algorithms}
	
	\keywords{Algebraic Loop Invariant,  Zariski Closure,  Polynomial Space, Algebraic Reasoning, Program Synthesis.}

\settopmatter{printfolios=true}

\title{Determination Problems for Orbit Closures and Matrix Groups}

\begin{abstract}
Computational problems concerning the orbit of a point
under the action of a matrix group occur throughout
computer science, including in program analysis, complexity theory, quantum computation, and automata theory.
In many cases  the focus extends beyond orbits proper to orbit closures under a suitable topology.
Typically one starts from a group and a set of points and asks questions
about the orbit closure of the set under the action of the group, e.g., whether two given orbit closures intersect.

In this paper we consider a collection of what we call determination problems concerning matrix groups and orbit closures.  These problems begin with  a given variety and 
seek to understand
whether and how it 
arises either as an algebraic matrix group or as an orbit closure. 
The \emph{how} question asks whether the underlying group is 
$s$-generated, meaning it is topologically generated by~$s$ matrices for a given number~$s$. 
Among other applications,  problems of this type have recently been studied in the context of synthesising loops subject to certain specified invariants on program variables.

Our main result is a polynomial-space procedure
that inputs a variety  and a number~$s$ and determines whether the given  variety arises as an orbit closure of a point under 
an $s$-generated commutative algebraic matrix group. 
The main tools in our approach are structural properties of commutative algebraic matrix groups and module theory. We leave open the question of determining whether a variety is an 
orbit closure of a point under an $s$-generated algebraic matrix group (without the requirement of commutativity).

\end{abstract}

\maketitle

\section{Introduction}
\label{sec:intro}

\subsection{Orbits and their Closures}
The general linear group $\GL_d(\mathbb{F})$ is the group of all $d\times d$ invertible matrices with entries in a field~$\mathbb{F}$.
 Let $G$ be a subgroup  of~$\GL_d(\mathbb{F})$, acting on $\mathbb{F}^d$ via left multiplication, that is,~$g\in G$ maps~$\bm{v} \in \mathbb{F}^d$ to $g\cdot \bm{v}$.
 The \emph{orbit} of~$\bm{v} \in \mathbb{F}^d$ under~$G$ is the set $G \cdot \bm{v}\coloneqq \{g \cdot \bm{v} \, |\, g \in G\}$.
The computational study of orbits of  subgroups of $\GL_d(\mathbb{F})$ stretches back many decades.
One of the most fundamental problems is determining whether a given pair of vectors~$\bm{u},\bm{v} \in \mathbb{F}^d$ lie
in the same orbit under the action of a finitely generated subgroup~$G$ of~$\GL_d(\mathbb{F})$, 
that is, whether some element $g \in G$ maps $\bm{v}$ to $\bm{u}$.
Over the field $\mathbb Q$ this problem is undecidable in general, but it is
decidable if~$G$ is commutative, and even in polynomial time if $G$ is generated by a single matrix~\cite{BBCIL96,kannan1986orbit}.

For many applications it makes sense to study \emph{orbit closures} in place of orbits.

Orbit closures can be seen as an instance of abstract semantics in program analysis, where one 
over-approximates the set of reachable program states by a set from a particular abstract domain (such as intervals, octagons, or polyhedra).
In the case at hand, the abstract domain is the collection of \emph{algebraic subsets} of {$\FFbar^d$ (where $\FFbar$ is the algebraic closure of $\mathbb F$)}, where
{$S\subseteq \FFbar^d$} is algebraic if it arises as the set of common zeros of a collection of polynomials in
$\mathbb F[x_1,\ldots,x_d]$.  One can thus think of algebraic sets as surfaces in  {$\FFbar^d$}.

The algebraic subsets of {$ \FFbar^d$} form the closed sets of a topology, called the \emph{Zariski topology}.
Formally,
the \emph{orbit closure} of $\bm{v} \in \mathbb F^d$ under the action of a subgroup $G$ of~$\GL_d(\mathbb{F})$
is the closure $\overline{G \cdot \bm{v}}$ of the orbit 
with respect to the Zariski topology, that is, $\overline{G \cdot \bm{v}}$ is the smallest algebraic superset of the orbit $G \cdot \bm{v}$.  
When $\mathbb F$ is the field of complex numbers, the 
Zariski topology is coarser than the familiar Euclidean topology, and thus the Zariski closure contains the Euclidean closure.  However   
the Zariski and Euclidean closures of an orbit coincide when $G$ is a
linear algebraic group (that is, when $G$ is both a group and algebraic set, see~\cref{sec:preliminaries}).

The orbit closure of a point $\bm{w}\in\mathbb F^d$ is determined by the set of polynomial relationships that hold on every point 
in the orbit of $\bm{w}$.
  For instance, consider the following matrix~$M$ and the vector $\bm{w}$:
 \[M:=\begin{pmatrix}
     1& 1\\
   1&0
 \end{pmatrix} \qquad \qquad \bm{w}=\begin{pmatrix}
    1\\
    0
\end{pmatrix}.\]
The orbit of $\bm{w}$ under the group generated by $M$ consists of the vectors~$M^n\cdot \bm{w}=(
    F_{n+1},
    F_n
)^\top$, where $F_n$ is the $n$th Fibonacci number for $n\in \mathbb Z$. 
Based on the identity $(F_{n+1}^2-F_{n+1}F_n-F_n^2)^2=1$, which holds for all $n\in \mathbb Z$,
the orbit closure is the curve 
$\{(x,y):(x^2-xy-y^2)^2=1\}$.

Orbit closure problems arise across many areas of computer science, including complexity theory, program analysis, quantum computation and automata theory~\cite{DerksenJK05, hrushovski2023strongest,SeidlMK15,HrushovskiOP018, KincaidCBR18,CyphertK24,SankaranarayananSM04,Muller-OlmS04,BurgisserLMW11,burgisser2024completeness}. 
Among many computational questions studied in this context are orbit-closure containment and intersection: the former  asks whether a vector $\bm{u} \in \mathbb{F}^d$ is contained in the orbit closure \(\overline{G \cdot \bm{v}}\) of another vector~$\bm{v}$, and the latter asks whether two such closures intersect.
A striking application arises in geometric complexity theory, where the \(\VP=\VNP\) problem\footnote{The {\bf VP}$=${\bf VNP} problem  is the algebraic variant of {\bf P}$=${\bf NP} problem.} has been studied in terms of orbit-closure containment with respect to the
action of general linear group~$\GL_d(\QQbar)$ on polynomial rings~\cite{BurgisserLMW11,burgisser2024completeness}.

In certain applications, such as non-convex optimisation problems,  non-commutative
rational identity testing, and  graph isomorphism~\cite{forbes2013conjug, derksen2020algorithms, blaser2021orbit, burgisser2021polynomial}, one considers the orbit closure of a vector~$\bm{v}$ under the action of an explicitly given  algebraic group~$G$,   which is  specified as the set of common zeros of a finite collection of polynomials.
In other
contexts, such as quantum computing and program analysis~\cite{DerksenJK05,hrushovski2023strongest,aitelmanssour2025loops}, the goal is to compute the orbit closure of a group $G$ that is \emph{implicitly} presented via a
finite set of topological generators.
More precisely, the group is expressed as   $G=\overline{\langle M_1,\ldots,M_s \rangle}$ with the matrices~$M_i$ given as  input.
We  refer to these two settings as \emph{explicit} and \emph{implicit} presentations of the orbit-closure problems, respectively.

 The explicit orbit-closure containment  and intersection problems\footnote{In~\cite{blaser2021orbit}, the problems are simply called orbit-closure containment and intersection.} can  be directly formulated as existential formulas of first-order logic over the base field~$\mathbb{F}$~\cite{blaser2021orbit}.
 In the other direction, it was
shown in~\cite{blaser2021orbit}  that orbit containment over $\mathbb{R}$  is polynomial-time equivalent to the existential theory of the reals, and it is \(\NP\)-hard over~$\QQbar$.
Further applications of explicit orbit-closure problems have been identified in~\cite{burgisser2021polynomial}, 
particularly in combinatorial optimisation and dynamical systems, under the assumption that the underlying group is commutative. 
The complexity of these problems is fully resolved in the case of a key subclass of commutative groups (namely, tori) but remains open for general commutative groups~\cite{burgisser2021polynomial}.

In the implicit orbit-closure problem, the main challenge in computing an orbit closure lies in computing a set of polynomials whose zero set is the Zariski closure of the group in question. 
For matrices over $\QQ$
an algorithm for this task  was given in~\cite{DerksenJK05}. This algorithm is not known to be elementary~\cite[Appendix C]{NPSHW2021};  
an elementary procedure albeit of severalfold exponential time was recently  provided in~\cite{NPSHW2021}.
Recent progress on computing multiplicative relations among algebraic numbers~\cite{combot2025computing} shows that for the special case of cyclic groups, the approach in~\cite[Theorem 1.1]{aitelmanssour2025loops}, which originally results in a polynomial-space algorithm, gives
a polynomial-time upper bound for the implicit orbit-closure computation of cyclic groups (and cyclic semigroups).  
 It remains a challenging open problem to close the complexity gap for the explicit orbit-closure problems in the general setting.

The study of 
implicit orbit-closure problems in 
quantum computation and automata theory have led to the resolution 
of the  equivalence problem for deterministic top-down  
tree-to-string transducers~\cite{SeidlMK15} and 
the threshold problem for quantum automata~\cite{DerksenJK05}. 
Orbit closures feature prominently in  program analysis when one
wants to automatically compute polynomial invariants of certain classes of loop programs~\cite{HrushovskiOP018, KincaidCBR18,CyphertK24,SankaranarayananSM04,Muller-OlmS04}.

\subsection{Determination Problems}

In this paper, we investigate a series of determination problems related to groups and their orbit closures. These problems start with a given algebraic set (commonly referred to as a variety) and examine whether it can be realized as a linear algebraic group or as an orbit closure, with the constraint that the underlying group be topologically $s$-generated. 
Here we define an algebraic group $G \leq \GL_d(\QQbar)$ to be 
\emph{topologically $s$-generated} if {there is a set $S \subseteq \GL_d(\QQbar)$  of cardinality  $s$ } such that  
 $G=\overline{\langle S\rangle}$.
By \cref{prop:gen},
algebraic groups are always  topologically generated by a finite set.

In this context, 
determining whether a variety $Z\subseteq \QQbar^d$ arises as an orbit closure under the action of $G$ is, in principle, straightforward. 
Let $\mathrm{Sym}(Z)$ denote the  subgroup of $\GL_d(\QQbar)$ consisting of all matrices that leave~$Z$ invariant, defined as $\mathrm{Sym}(Z)\coloneqq \{ A \in
\GL_d(\QQbar) : A(Z)=Z\}$.
Any group~$G$ having $Z$ as an orbit-closure 
 is necessarily a subgroup of $\mathrm{Sym}(Z)$,
and so we
 may assume without loss of generality that $G$ is $\mathrm{Sym}(Z)$.
But $\mathrm{Sym}(Z)$ is
definable in first-order logic over~$\QQbar$ and hence the orbit closure of any 
point of $Z$ under $\mathrm{Sym}(Z)$ is definable over the real closed field $\QQbar\cap\mathbb R$.\footnote{The Zariski closure of every first-order definable 
set over~$\QQbar$ is equal to its Euclidean closure and is thus
first-order definable over $\QQbar\cap\mathbb R$ after identifying 
each element of $\QQbar$ as the pair of its real and imaginary parts.}
Hence the question of whether $Z$ arises as the orbit closure of a point under
$\mathrm{Sym}(Z)$ reduces to the decision problem for the theory of real closed fields.
It further holds by \cref{prop:gen} that 
 $\mathrm{Sym}(Z)$ has a finitely generated subgroup whose Zariski closure is~$\mathrm{Sym}(Z)$ itself.  Whence $Z$ is the orbit closure
of a point under some finitely generated matrix group if
and only if it is the orbit closure of a point under
$\mathrm{Sym}(Z)$.
 However, it is more challenging to determine whether a given variety is the orbit closure of a topologically $s$-generated group than simply determining whether it is an orbit closure.

Our main  determination problems are  as follows:
\begin{itemize}
    \item The \textbf{Group Determination} problem asks,  given 
  $s \in \N$ and a family
 of $m$ polynomials in $\QQ[\{x_{i,j}\}_{1\leq i,j\leq d}]$, each
 of total degree at most $b$,
 to determine whether their zero locus
 $Z\subseteq\GL_d(\QQbar)$ is an $s$-generated algebraic matrix group.
 \item The \textbf{Orbit-closure Determination} problem asks,  given 
  $s \in \N$ and a family
 of $m$ polynomials in $\QQ[\{x_i\}_{1\leq i\leq d}]$, each
 of total degree at most $b$,
  determine whether 
 their zero locus~$Z\subseteq \QQbar^d$ is 
 the orbit closure of some point~$\bm{v}\in \QQbar^d$ under the action of an 
 $s$-generated algebraic matrix group. 
\end{itemize}
In our complexity analysis we refer to  the tuple $(s,d,m,b)$ as the parameters of the problem instances. By~\cref{prop:cyclic,prop:uni,Theremark},  the minimum number~$s$ of topological generators for the groups we study, commutative algebraic groups, is upper  
bounded by~$d$.

This paper focuses on addressing the complexity of the determination problems for commutative algebraic matrix groups. 
Our main result, \cref{pro:loopGen}, is a polynomial-space procedure for both problems, based on reductions to the decision problem for formulas in a fragment of the first-order theory of algebraically closed or real closed fields of characteristic zero.

\begin{example} \label{ex:loopGenintro}
Let $Z\subseteq \QQbar^4$ be the set of common zeros of the two polynomials
\(Q_1:=x_2^2 - x_1 - x_4\) and \(Q_2:= -2 x_4 x_2 - 2x_3^2 - \frac{1}{5} x_2 x_3\).
The task underlying this instance of the orbit-closure determination problem (as above) is to determine whether \(Z\) is the orbit closure of some point \(v\in\QQbar^4\) under the action of an algebraic matrix group with a prescribed number of generators.

Our nondeterministic procedure in \cref{pro:loopGen} shows that \(Z\) is the orbit closure of a 1-generated algebraic matrix group, i.e., \(Z=\overline{\langle M\rangle \cdot \bm{v}}\).
The following matrix \(M\) and vector \(\bm{v}\) pair witness that $Z$ is the orbit-closure of a one-generated (cyclic) matrix group: 
\[
M =  \begin{pmatrix}
    25& \phantom{-}0& -1& 20\\
    0& \phantom{-}5& \phantom{-}0& 0 \\
    0& -\frac{1}{2}& \phantom{-}5& 0\\
    0& \phantom{-}0& \phantom{-}1& 5
\end{pmatrix} \quad \text{ and } \quad \boldsymbol{v}=  \begin{pmatrix}
    1\\
    1\\
    0\\
    0
\end{pmatrix}.
\]
An account of the steps taken to produce $M$ and $\bm{v}$ is given in~\cref{ex:loopGen}. \hfill $\blacktriangleleft$
\end{example}

The extension of our results from the case of commuting matrices to 
 the case of general algebraic matrix groups appears to be  challenging.
To approach the above version of the orbit determination problem, we rely on the observation that with respect to a convenient basis an orbit closure itself carries the structure of an algebraic matrix group.  We then  
examine structural properties of
 semisimple and unipotent linear algebraic groups to identify  when the orbit, seen as a group, arises as the closure of a commutative group.

\subsection{Application in Programming Languages}

A research direction closely related to orbit-closure determination is the synthesis of linear loops. 
While computing the Zariski closure of an orbit is used to compute polynomial invariants of loops,
the loop synthesis problem reverses this direction: given an input polynomial invariant, one seeks to construct update rules for loop variables that maintain the invariant.

Suppose a deterministic polynomial loop involves integer variables $x,y,z,w$, and in each iteration, both $w$ and $y$ are incremented by one. The goal is to determine suitable updates to~$x$ and $z$ so that a specified invariant, such as $x^2-y^2z^2+z^3=0$, remains valid after each iteration, assuming that it holds initially.
In other words, we have a partially defined loop
    \begin{algorithmic}
 \Ensure $x^2-y^2z^2+z^3=0$
\While{$(*)$}%
 \State \(x \leftarrow P(x,y,z,w)\);
 \State \(y \leftarrow y +1\);
 \State \(z \leftarrow Q(x,y,z,w)\);
 \State \(w \leftarrow w+1\);
\EndWhile    
    \end{algorithmic}
\noindent
where $*$ denotes a wildcard that nondeterministically evaluates to false or true,
and the task is to determine assignments \(x\leftarrow P(x,y,z,w)\) and \(z\leftarrow Q(x,y,z,w)\) such that 
$x^2-y^2z^2+z^3=0$ is a loop invariant.
A valid solution in this case is to set \(P(x,y,z,w) \coloneqq y(y^2-w^2)\) and \(Q(x,y,z,w)\coloneqq y^2 - w^2\).
This kind of synthesis task is closely connected to the classical algebraic geometry problem of parametrising varieties defined by polynomial equations~\cite{aitelmanssour2025loops}.

Recent developments have focused on techniques for synthesising deterministic linear loops, often by encoding the synthesis task as a constraint-solving problem over algebraic structures~\cite{humenberger2020algebra,humenberger2022algebra, kenison2023polynomial, hitarth2024quadratic,aitelmanssour2025loops}.  
Geometrically, this amounts to finding an infinite orbit of a cyclic matrix group that remains entirely within the variety defined by the invariant. Some formulations restrict attention to infinite orbits in order to avoid trivial cases where the synthesis reduces to solving a finite system of polynomial equations~\cite{hitarth2024quadratic}.

Several concrete approaches have been proposed to instantiate these synthesis procedures. One line of work develops constraint-solving methods that generate loops preserving a given polynomial invariant, based on templates supplied by the user~\cite{humenberger2020algebra,humenberger2022algebra}. Other approaches have focused on restricted classes of invariants, such as those defined by a single quadratic equation~\cite{hitarth2024quadratic}  or by ideals generated by pure binomials~\cite{kenison2023polynomial}. In the latter case, the invariant variety is a union of toric varieties, and the synthesis relies on a construction by~\cite[Proposition 14]{galuppi2021toric}, showing that for every  toric variety \(V\) one can construct a  rational matrix \(M\) 
such that the Zariski closure of $\{M^n: n\in \mathbb{N}\}$ equals~$V$.
Using our terminology, toric varieties are topologically 1-generated.

A recent direction studies a strong variant of loop synthesis~\cite{aitelmanssour2025loops}, where the objective is not merely to keep the orbit within the target variety, but that the orbit closure of the loop variables coincides exactly with the input variety. 
This corresponds computationally to an alternative formulation of the orbit-closure determination problem studied in this paper, where the goal is to decide whether a given variety arises as the orbit closure of some vector under a matrix group.

The approach in~\cite{aitelmanssour2025loops} 
is restricted to cyclic groups generated by a single rational matrix (modelling single-path loops) and assumes that the input includes an explicit bound on the bit-size of the generator. In contrast, the present work solves a broader version of the strong loop synthesis problem, allowing for an arbitrary number~$s$ of commuting generators, without requiring any a priori bit-size bound. 
To relax the bit-bound restriction, we formulate the problem over the field~$\overline{\QQ}$ of algebraic numbers. 
This widens the scope from deterministic to nondeterministic loops.
{For instance, given a variety \(Z\subseteq \QQbar^d \), our methods can determine whether $Z$ is the orbit closure of a vector under a 2-generated matrix semigroup.  In other words, whether there exist matrices \(M_1, M_2\in\GL_d(\QQbar)\) and a vector \(\bm{v}\in\QQbar^d\) for which \(Z = \overline{\langle M_1,M_2\rangle \cdot \bm{v}}\).
This problem corresponds to that of synthesising a nondeterministic loop of the form
\begin{algorithmic}
\State \(\bm{x} \leftarrow \bm{v}\);
\While {($\ast$)}
\If {($\ast$)}
 \State \(\bm{x} \leftarrow M_1 \bm{x}\);
  \Else
  \State \(\bm{x} \leftarrow M_2 \bm{x}\);
  \EndIf
\EndWhile
\end{algorithmic}
such that the orbit closure of the loop coincides exactly with the input variety \(Z\).
}

\begin{example} \label{ex:loopGenintro2} Let us revisit \cref{ex:loopGenintro} from the viewpoint of loop synthesis. Recall that 
\(Q_1 = x_2^2 - x_1 - x_4\) and \(Q_2 = -2 x_4 x_2 - 2x_3^2 - \frac{1}{5} x_2 x_3\).
Consider the following task, \emph{construct a deterministic linear loop for which \(Q_1(\bm{x}) = 0 \land Q_2(\bm{x})=0\) is an invariant}.
This task amounts to determining an initial assignment \(\bm{x}\leftarrow \bm{v}\) and 
linear update \(\bm{x}\leftarrow M \bm{x}\) in the following loop:
\begin{algorithmic}
\State \(\bm{x} \leftarrow \bm{v}\);
\While {($\ast$)}
 \State \(\bm{x} \leftarrow M \bm{x}\);
\EndWhile
\end{algorithmic}
such that $Q_1(\bm{x})=0\wedge Q_2(\bm{x})=0$ is a loop invariant.  
In other words, we require that the following  Hoare triples are satisfied:
\begin{enumerate}
    \item $\{ \mathrm{true} \} \; \bm{x}\leftarrow \bm{v} \; \{Q_1(\bm{x})=0\wedge Q_2(\bm{x})=0 \}$;
    \item $\{Q_1(\bm{x})=0\wedge Q_2(\bm{x})=0\}\; \bm{x} \leftarrow M \bm{x} \; \{Q_1(\bm{x})=0\wedge Q_2(\bm{x})=0\}$.
\end{enumerate}

 Note that the above synthesis task may admit multiple solutions, including trivial solutions in which $M$ is the identity matrix.
 However, there is a natural notion of a maximal (or most permissive) solution, which
is a loop for which $Q_1(\bm{x})=0\wedge Q_2(\bm{x})=0$ is the strongest invariant. 
Geometrically, a solution is maximal if the set $Z$ of common zeros of $Q_1$ and $Q_2$ is the Zariski closure of the 
set $\langle M\rangle \cdot \bm{v}$ of reachable values of the program variables.  A motivation to look for such a solution is that 
it enables one to establish liveness properties, guaranteeing that certain program configurations can be reached.
(A sufficient condition for the orbit $\langle M\rangle \cdot \bm{v}$ to intersect a target set $Y$ is that 
the Zariski closure $Z$ meet the interior of $Y$.)
The algorithm we present in \cref{pro:loopGen} finds a most permissive solution.

The requirement that the linear loop be deterministic (have a single-path) translates, in the algebraic setting, to the requirement 
that \(Z\) be the orbit closure of a 1-generated algebraic matrix group. 
In this example, we instantiate the loop above with e.g.\ 
\[
M =  \begin{pmatrix}
    25& \phantom{-}0& -1& 20\\
    0& \phantom{-}5& \phantom{-}0& 0 \\
    0& -\frac{1}{2}& \phantom{-}5& 0\\
    0& \phantom{-}0& \phantom{-}1& 5
\end{pmatrix} \quad \text{ and } \quad \boldsymbol{v}=  \begin{pmatrix}
    1\\
    1\\
    0\\
    0
\end{pmatrix}
\]
and so deduce that \(Z = \overline{\langle M \rangle \cdot \bm{v}}\).
\end{example}

\subsection{Overview of the Main Result}

We reduce our determination problems to satisfiability problems in fragments of the first-order theory of  the algebraically closed  or  real-closed fields.
These theories are formulated in the first-order language of rings, which includes constant symbols $0$ and $1$, as well as binary function symbols for addition and multiplication. The theory of algebraically closed field is the set of all sentences in this language that are true over~$\QQbar$.
Similarly, the  theory of the real numbers consists of those sentences that are true over~$\mathbb{R}$.  
The following theorem gives a complexity bound on the decision problem for this theory.

\begin{theorem}[\cite{chistov1984complexity,basu1996quantifier,basu2006algorithms}]
    Consider a first-order sentence in the language of rings that mentions 
    $m$ polynomials in $d$ variables, with total degree at most $b$, and with $k$ quantifier alternations.  The truth of such a sentence over $\QQbar$ and $\QQbar \cap \mathbb{R}$ can both be decided in time $(mb)^{d^{2k+2}}$. 
\label{thm:FO}
 \end{theorem}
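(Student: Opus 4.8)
The plan is to prove the bound by \emph{effective quantifier elimination} for algebraically closed fields of characteristic zero, processing the quantifier blocks from the innermost outward and carefully tracking, at each stage, how the number and the degrees of the polynomials appearing in the formula grow.

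First I would normalise the sentence into prenex form $Q_1\bar y_1\, Q_2\bar y_2\cdots Q_k\bar y_k\,\varphi(\bar y_1,\ldots,\bar y_k)$, where the $Q_i$ alternate between $\exists$ and $\forall$, each $\bar y_i$ is a tuple of at most $d$ variables, and $\varphi$ is a Boolean combination of polynomial equations $g=0$ and inequations $g\neq 0$ with $g$ among the $m$ given polynomials of degree at most $b$. Since $\forall = \neg\exists\neg$, it suffices to describe how one eliminates a single existential block $\exists\bar y$ applied to a quantifier-free formula in the variables $\bar y$ together with the still-quantified ``parameter'' variables $\bar z$, and to analyse how the parameters $(m,b)$ of the formula evolve.

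Next comes the heart of the argument: eliminating one existential block uniformly in $\bar z$. Put the body in disjunctive normal form and push $\exists$ across $\vee$, reducing to a single conjunction $\bigwedge_i p_i(\bar y,\bar z)=0 \wedge \bigwedge_j q_j(\bar y,\bar z)\neq 0$. Apply the Rabinowitsch trick, adjoining one fresh variable $w_j$ per inequation with the equation $w_j q_j - 1 = 0$, to reduce to a pure system of equations in the variables $\bar y,\bar w$ with coefficients in $\Q[\bar z]$. Over the algebraically closed field $\overline{\Q(\bar z)}$ such a system has a common zero exactly when the ideal it generates is proper, and by the effective Nullstellensatz (Hermann/Koll\'{a}r-type degree bounds) this is certified — or refuted — by a B\'ezout identity whose degree is controlled by the input degrees. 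The route giving the cleanest parameter bookkeeping, however, is to eliminate the variables $\bar y,\bar w$ one at a time by iterated resultants: after a generic linear change of coordinates making the system monic in the variable being removed, replace the current polynomials by the set of their pairwise resultants, which at worst squares the number of polynomials and multiplies their degrees by a factor at most the current degree. Doing this for the at most $d$ variables of the block replaces $\exists\bar y\,\varphi$ by an equivalent Boolean combination of polynomial conditions on $\bar z$ alone, with a new degree bound $b' \le b^{\,d^{O(1)}}$ and a new polynomial count $m' \le (mb)^{\,d^{O(1)}}$ (the latter also absorbing the DNF blow-up).

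Finally I would iterate this over the $k$ blocks and collect the tower. Writing $B_i$ for the degree parameter after processing $i$ blocks from the inside, the recurrence $B_{i+1} \le B_i^{\,d^{c}}$ with $B_0 \le b$, together with a parallel (dominated) recurrence for the polynomial count and formula length, unwinds to a bound $b^{\,d^{O(k)}}$; a careful accounting of the exponent constants — as in Chistov's analysis — sharpens $d^{O(k)}$ to $d^{2k+2}$. Once all quantifiers have been removed, the sentence is a Boolean combination of (in)equalities among explicit rationals, evaluable in time polynomial in its size, and all intermediate arithmetic stays over $\Q$ with coefficient bit-lengths inside the same budget, yielding the stated time $(mb)^{d^{2k+2}}$. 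The hard part will be the single-block elimination: a naive Gr\"obner-basis computation already incurs a doubly-exponential blow-up \emph{within} one block, which would destroy the tower, so one must use resultant- or Nullstellensatz-degree-based elimination and, crucially, control the genericity and case analysis (vanishing leading coefficients, exceptional loci of the coordinate changes) so that it composes across the $k$ alternations without an extra exponential at each step.
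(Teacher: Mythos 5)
The paper does not prove this statement: it is quoted verbatim from \cite{chistov1984complexity} as a black-box complexity bound, so there is no in-paper argument to compare yours against. Judged on its own terms, your outline has the right overall shape (prenex form, eliminate one block at a time, track the number and degrees of polynomials, unwind a recurrence over the $k$ alternations), but the step that actually has to carry the theorem is wrong as described. Eliminating a block by \emph{iterated pairwise resultants}, where each variable elimination ``at worst squares the number of polynomials and multiplies their degrees by a factor at most the current degree,'' gives the recurrence $B\mapsto B^2$ per variable and hence degree $b^{2^{d}}$ after one block of $d$ variables --- doubly exponential in $d$, not the claimed $b^{\,d^{O(1)}}$. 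This is exactly the Collins-style blow-up you warn against in your last sentence, yet it is the method your middle paragraph actually specifies; the claimed per-block bounds $b'\le b^{\,d^{O(1)}}$ and $m'\le (mb)^{\,d^{O(1)}}$ do not follow from it, and without them the tower unwinds to $(mb)^{2^{O(dk)}}$ rather than $(mb)^{d^{2k+2}}$. Obtaining a single-exponential cost per block is the real content of Chistov--Grigoriev-type quantifier elimination and requires different machinery: either a parametric effective Nullstellensatz with single-exponential degree bounds, or multivariate $u$-resultants applied to the whole system at once, or the decomposition of the solution variety into irreducible components whose number and degrees are controlled by a B\'ezout bound. A secondary gap: putting the matrix in disjunctive normal form can produce $2^{m}$ disjuncts, which is not dominated by $(mb)^{\,d^{O(1)}}$; the standard fix is to enumerate only the realizable sign conditions of the $m$ polynomials, of which there are at most $(mb)^{O(d)}$, and that bound itself needs proof. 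As written, the proposal is a plausible roadmap but not a proof of the stated bound.
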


Following~\cite[Remark~13.10]{basu2006algorithms}, 
the truth of first-order sentences, over both $\QQbar$ and $\QQbar \cap \mathbb{R}$, 
with a fixed number of alternations can be decided in space~$(d \log{b})^{O(1)}$. 
 
The following theorem is our main  contribution: 

\begin{restatable}{theorem}{bigtheorem}
\label{pro:loopGen}
The orbit-closure determination problem for commutative  matrices with  parameters \((s,d,m,b)\) can be decided in time 
   $(mb)^{\mathrm{poly}(d)}$, and in space bounded by~$(d \log{b})^{O(1)}$. 
\end{restatable}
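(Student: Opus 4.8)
The plan is to reduce the orbit-closure determination problem to a bounded-alternation sentence in the first-order theory of $\QQbar$ and then invoke \cref{thm:FO} together with the space bound quoted from \cite{basu2006algorithms}. The first step is structural: given the defining polynomials $f_1,\dots,f_m$ of the putative orbit closure $Z$, observe that if $Z = \overline{G\cdot\bm v}$ for a commutative group $G$, then with respect to a suitable basis $Z$ itself inherits a group structure — more precisely, $Z$ is (a translate of) a commutative algebraic matrix group acting simply transitively on itself, so that the orbit map $g\mapsto g\bm v$ identifies $G$ with $Z$. Hence the problem splits into two sub-questions: (i) is $Z$, after a linear change of coordinates, a commutative algebraic matrix group $H$? and (ii) is $H$ topologically $s$-generated? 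For (ii) one uses the structure theory of commutative linear algebraic groups: $H$ decomposes (up to finite index / as $H^\circ$ times a finite group) into a product of a torus part and a unipotent part, and by \cref{prop:cyclic}, \cref{prop:uni} and \cref{Theremark} each part has a computable minimal number of topological generators bounded by $d$, with $s$-generation expressible by an arithmetic condition on the lattice of characters (torus part) and the dimension (unipotent part).

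The second step is to express all of this as a first-order formula over $\QQbar$ of bounded quantifier alternation with parameters of size polynomial in $(d,m,b)$. The change of basis is an existentially quantified matrix $P\in\GL_d$; the assertion that $P^{-1}ZP$ is closed under multiplication, contains the identity, and is closed under inversion is a $\forall$-statement over $Z\times Z$ (using the $f_i$ to define membership in $Z$); commutativity is likewise a bounded $\forall$-statement. The simply-transitive action — equivalently, the existence of $\bm v$ with $\overline{H\bm v}=Z$ — can be captured by asserting that the map $H\to Z$, $g\mapsto g\bm v$, is dominant and injective, which is again first-order of bounded alternation because $H=Z$ (in the new coordinates) has dimension equal to that of $Z$, and an injective dominant morphism between varieties of the same dimension over $\QQbar$ has dense image. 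The $s$-generation condition for the torus factor amounts to the existence of $s$ diagonal (in an eigenbasis) matrices whose exponent vectors, read against the character lattice, generate a subgroup that is Zariski-dense — a condition on the rank of an integer matrix and on torsion that can be phrased (after bounding the relevant integers) as a first-order condition; for the unipotent factor it reduces to $s\geq \dim$, which is already determined by $Z$. Collecting these yields a sentence $\exists P\,\exists\bm v\,\exists(\text{generators})\,\forall(\cdots)$ with a constant number of alternations, mentioning $O(\mathrm{poly}(d,m))$ polynomials of degree $O(\mathrm{poly}(b,d))$ in $O(\mathrm{poly}(d))$ variables; \cref{thm:FO} then gives the $(mb)^{\mathrm{poly}(d)}$ time bound and the cited rewriting into the real-closed theory gives the $(d\log b)^{O(1)}$ space bound.

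The main obstacle I anticipate is step one — correctly and tightly characterising when $Z$ is an orbit closure of a \emph{commutative} group in a way that is both \emph{sound} and \emph{first-order of bounded alternation}. Two subtleties must be handled carefully. First, the group $G$ realising $Z$ need not be closed, only topologically finitely generated, so one must argue that $Z=\overline{G\cdot\bm v}$ forces $Z$ to equal $\overline{G}\cdot\bm v'$ for a closed commutative $\overline G$ and possibly a different base point, reducing to the closed case without loss of generality; this relies on \cref{prop:gen}-type density facts. Second, the decomposition of a commutative algebraic group into torus and unipotent parts, and the extraction of the character lattice, must be made effective with polynomial parameter blow-up: one needs the eigenvalue structure of a spanning set of $Z$ to be accessible, which is why working in an eigenbasis (existentially quantified) is essential, and why the lattice conditions must be a priori bounded (using \cref{prop:cyclic}, \cref{prop:uni}, \cref{Theremark} and a height bound on the relevant integers) so that they remain first-order rather than requiring genuine integer quantifiers. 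Once these structural facts are pinned down, the remaining work is the routine but careful bookkeeping of degrees, numbers of variables, and alternation count needed to feed \cref{thm:FO}.
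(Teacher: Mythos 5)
Your overall architecture (enumerate bounded integer data outside the sentence, express the rest as a bounded-alternation sentence over $\QQbar$, invoke \cref{thm:FO}, and quote the space bound for bounded-alternation sentences) matches the paper's, but the structural reduction at the heart of your first step is flawed, and the two lemmas the paper uses to repair it are exactly what is missing. The claim that the orbit map $g\mapsto g\bm{v}$ identifies $G$ with $Z$ (simple transitivity) fails because of stabilisers: take $G$ the full diagonal torus in $\GL_2(\QQbar)$ and $\bm{v}=(1,0)^\top$; then $Z=\overline{G\cdot\bm{v}}$ is the $x$-axis, which has dimension $1<\dim G=2$, contains the origin, and carries no group structure for which the orbit map is an isomorphism. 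So ``is $Z$, after a change of coordinates, a commutative matrix group on which $G$ acts simply transitively?'' is not a sound characterisation, and the subsequent transfer of the $s$-generation criteria from the group-determination setting to $Z$ does not go through (the paper's example with $\langle x_1^2-x_3^2,\,x_2^2-x_3^2\rangle$ shows the minimal number of generators genuinely differs between the two settings). The paper instead proves \cref{lemma:MatrixUlike}: one may conjugate so that $\bm{v}$ becomes a zero-one vector and project onto its support via a matrix $T\in\{0,1\}^{d\times k}$ with $T^\top T=\Id_k$, after which the \emph{semisimple} orbit closure becomes $T\cdot H_\Lambda$ for a lattice $\Lambda\subseteq\Z^{k}$ with $k$ possibly smaller than $d$; the pair $(T,\Lambda)$ is precisely the bounded integer data that gets guessed outside the first-order sentence, with the entry bound $b$ supplied by \cref{prop:bound}.

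The second gap concerns the unipotent part. To place $Z$ inside a first-order sentence you need an explicit parameterisation with no residual Zariski closure, since taking closures is not first-order definable. The paper's \cref{lemma:UnipAction} proves that $\overline{G\cdot\bm{v}}=G_u\cdot\overline{G_s\cdot\bm{v}}$, i.e.\ that the set $\{\exp(\sum_{i=1}^s t_i\log U_i)\,P^{-1}Tg : g\in H_\Lambda,\ t_i\in\QQbar\}$ is already closed; this needs a genuine argument (quotienting by the linear subspace of unipotent parameters that stabilise $\overline{G_s\cdot\bm{v}}$ and showing the resulting parameterisation is injective). Your proposal never addresses why the image is closed, and the sentence you write down has the wrong shape: membership in the parameterised set requires a trailing existential block for $t_1,\ldots,t_s$, so the paper ends up with an $\exists^*\forall^*\exists$ sentence rather than the $\exists^*\forall^*$ form you describe. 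Relatedly, your claim that the unipotent generation count ``is already determined by $Z$'' as a dimension is unjustified: because of stabilisers, unipotent groups of different dimensions can produce the same orbit closure, which is why the paper existentially quantifies over commuting unipotent matrices $U_1,\ldots,U_s$ and verifies the set equality rather than reading the answer off $Z$.
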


The proof of Theorem 2 shows how, given a variety $Z\subseteq \QQbar^d$,
to recover an $s$-generated group $G\subseteq \GL_d(\QQbar)$ and vector $\boldsymbol v\in Z$ such
that $Z=\overline{G \cdot \boldsymbol v}$.  The following is a summary of the key elements.  We will explain technical terms and expand on
each point immediately below:
\begin{enumerate}
\item
With respect to a suitable basis, the vector $\boldsymbol v$ has all
its entries 0 or 1 and hence can be guessed in nondeterministic polynomial time.
Crucially, given this form for $\boldsymbol v$, the orbit closure $Z$
itself carries a group structure under component-wise multiplication.
\item 
The semisimple part $G_s$ of $G$ is characterised up to isomorphism by an
additive subgroup $\Lambda$ of $\mathbb Z^k$ for some positive integer
$k$ such that the quotient group~$\mathbb Z^k/\Lambda$ has $s$ generators.  Moreover, the generators of $\Lambda$ have bit-size bounded polynomially in 
the description of $Z$; hence $\Lambda$ can also be guessed
in nondeterministic polynomial time.
\item
The existence of an $s$-generated unipotent group $G_u$ such that $Z = \overline{G_u\cdot G_s \cdot \boldsymbol v}$
can be expressed in first-order logic given descriptions of $Z$, $\boldsymbol v$, and  $\Lambda$.
This relies on the fact that for a unipotent matrix $U$, the matrix exponential $U^n$ is a matrix of polynomials in $n$.
\end{enumerate}

We write  $\textrm{Id}_d$ for the identity matrix of dimension~$d$. A matrix \(M\in \QQbar^{d\times d}\) is  \emph{unipotent} if \((M -\textrm{Id}_d)^d\) is the zero matrix, and 
\emph{semisimple} if it is diagonalisable over~\(\QQbar\).
Let $G$ be a commutative algebraic group.  By~\Cref{fact:decomp}, the subset $G_s$ of semisimple  matrices in $G$  forms an algebraic subgroup; similarly,  the set~$G_u$ of unipotent matrices in $G$ forms an algebraic subgroup. 
Moreover, the decomposition $G=G_u \cdot G_s$ holds in the commutative setting.

For a given  instance of the orbit-closure determination problem, let $Z$ be the zero locus of the input polynomials. If $Z$ is
the orbit closure  of a vector~$\bm{v}$
under the action of some group~$G$,  by the above, it can be written as
$\overline{G_u \cdot G_s \cdot \bm{v}}$.
If both  $G_s$ and $G_u$ are $s$-generated, then
by~\cref{Theremark},  the group $G$ is also $s$-generated.

A key technical result, \cref{lemma:MatrixUlike}, states that for given a commutative  matrix group~$G$, generated by matrices~$M_1,\ldots,M_s$ and  a vector~$\bm{v}$,  we can choose a change of basis (given by an invertible matrix~$P$)  that brings both the group~$G$ and the vector~$\bm{v}$ into a simplified and canonical form. In this new basis:
\begin{itemize}
    \item the semisimple subgroup  becomes diagonal;
    \item the unipotent subgroup becomes upper unitriangular; and
    \item   $\bm{v}$ is mapped to a binary vector $T\bm{1}$, where $T\in \{0,1\}^{d \times k}$ satisfies $T^{\top}T=\textrm{Id}_k$.
\end{itemize}
Our algorithm in~\cref{pro:loopGen}
guesses the matrix~$T$.
The algorithm proceeds by assuming  that~${P\bm{v}:=T\bm{1}}$. 
Write
\[PG_sP^{-1}=\overline{\langle  D_i : 1\leq i \leq s \rangle} \,.\]
In this new basis, the orbit-closure of~$P\bm{v}$ under 
$\langle  D_i : 1\leq i \leq s \rangle$ forms a linear algebraic group, and  becomes  
the zero set of a pure binomial ideal~$I$.
Recall that a  pure binomial ideal 
	in the variables~$\bm{x}=(x_1,\ldots,x_d)$
	is an ideal generated by polynomials of the from~$\bm{x}^{\bm{\lambda}}-\bm{x}^{\bm{\lambda'}}$, where  
	$\bm{\lambda}, \bm{\lambda'} \in \NN^d$.
	Associated with any such ideal is its exponent lattice~$\Lambda$, a subgroup of $\mathbb{Z}^d$ collecting the vectors~$\bm{\lambda} - \bm{\lambda'}$.
Studying this lattice provides key structural insight into the underlying variety defined by~$I$.

By the requirement on the number of generators, the orbit closure
 $\overline{\langle  D_i : 1\leq i \leq s \rangle \cdot P\bm{v}}$ seen as a linear algebraic group must also have $s$  topological generators. 
By~\cref{prop:cyclic}, 
 the torsion subgroup of 
     $\mathbb{Z}^d/\Lambda$ has at most $s$  generators.
A careful analysis in~\Cref{claim:intersection}, combined with~\cref{prop:bound}, shows that the degree bound~$b$ on the defining polynomials of~$Z$ also applies to the generators of the binomial ideal~$I$.
The degree bound~$b$ allows the algorithm to guess a lattice $\Lambda$, generated by vectors with entries bounded in absolute value by~$b$, and such that $\mathbb{Z}^d/\Lambda$ has at most $s$
  generators, in line with~\cref{prop:cyclic}.
For this lattice $\Lambda \subseteq \mathbb{Z}^d$, we define $H_\Lambda$ as a subgroup of the \(d\)-dimensional multiplicative variety, where $\boldsymbol{a} \in H_\Lambda$ if $\bm{a}^{\bm{\lambda}}=1$ for all $\bm{\lambda}\in \Lambda$.

 By~\cref{Theremark}, there exist unipotent matrices 
$U_1, \ldots, U_s$ that topologically generate  
the unipotent subgroup $G_u$ of~$G$. Furthermore,  by~\cref{prop:uni}
the following equality holds:
\[G_u=\biggl\{\exp\biggl(\sum_{i=1}^s t_i \log U_i\biggl)  :  t_1,\ldots,t_s\in\QQbar\biggr\}.\] 
Since the power series of $\exp$ has at most $d$ terms in this case (see~\Cref{sec:preliminaries}), it follows that 
the group $G_u$ is definable by a first-order formula with parameters~$U_1, \ldots,U_s$. 

We are now in a position to encode the given instance of the problem as a first-order sentence in the theory of  real-closed fields. An important observation enabling this reduction is that, for linear algebraic groups, the Zariski and Euclidean closures of an orbit coincide (see~\Cref{fact:EucZar}). In our encoding, we rely on this fact to describe the algebraic closure of a constructible set using Euclidean conditions:
by requiring that for all $\varepsilon > 0$, there exists a group element constructed by our guesses mapping the image within $\varepsilon$ of some  vector in the input~$Z$.

More in detail, the orbit-closure verification task is  expressed    as a first-order sentence with quantifier prefix of the form~$\exists^* \forall^* \exists^*$, 
that is, a block of existential quantifiers followed by a block of universal quantifiers and a final existential quantifier.
The outermost existential quantifiers encode the possible choices of the matrices $P$ and $U_1, \ldots, U_s$, while the
equality of $PZ$ and 
\begin{equation*}
 \overline{\langle U_i :1\leq i\leq s \rangle \cdot TH_\Lambda}
\end{equation*}
is encoded by a $\forall^*\exists$-sentence with parameters $P$ and $U_1, \ldots, U_s$.
The algorithm returns "yes", meaning that $Z$ is  
an orbit closure  of a vector $\bm{v}$
under the action  of the group~$G$, if the above sentence is satisfiable over~$\mathbb{R}$. 
By \cref{thm:FO}, the truth of such a sentence can be decided in time~$(mb)^{\mathrm{poly}(d)}$.
Then the overall complexity bound follows from the fact that the number of choices of the lattice $\Lambda$ and vector $P\bm{v}$ is at most $(2b)^{d^2+1}$.
This concludes our informal overview of the proof of \cref{pro:loopGen}; 
the detailed proof can be found in \cref{sec:orbitdet}. 

\noindent \textbf{Orbit-Closure vs. Group Determination.} En route to proving~\cref{pro:loopGen} on orbit-closure determination, we consider a  variant---namely  group determination. 

For group determination  we first consider a simpler setting where the input polynomial ideal~$I\subseteq \mathbb Z[\bm{x}]$ is a pure binomial ideal.
Recall that we can associate a lattice 
$\Lambda = \{ \bm{\lambda} - \bm{\lambda'} \in \mathbb Z^d : \bm{x}^{\bm{\lambda}}-\bm{x}^{\bm{\lambda'}} \in I\}$ with~$I$. 
 It is well-known that, if the input ideal $I$ defines a group~$G$ then it is necessarily topologically generated by diagonal matrices. 
If the torsion subgroup of 
     $\mathbb{Z}^d/\Lambda$ is $s$-generated, then, by~\cref{prop:cyclic},
the minimal number of  topological generators of $G$ is either $s$ or $s+1$ (depending on the rank of $\Lambda$).
However, in the  setting of orbit-closure determination, %
this lower bound  on the  number of generators  may no longer hold, as shown by the following example.

\begin{example}\label{ex:grouporbit}

    Let  $\Lambda \subseteq \Z^3$ be the associated lattice of the pure binomial ideal $I \coloneqq \langle x_{1}^2 - x_{3}^2 , x_{2}^2 - x_{3}^2 \rangle$.
    That is, $\Lambda = \{c_1 (2,0,-2)^\top + c_2(0,2,-2)^\top : c_1, c_2 \in \ZZ\}$.
    By the fundamental theory of finitely generated abelian groups, the quotient group $\Z^3/\Lambda$ is isomorphic to $(\Z/2\Z) \times (\Z/2\Z) \times \Z$.
    In particular, the torsion subgroup of $\Z^3/ \Lambda$ is $(\Z/2\Z) \times (\Z/2\Z)$, a product of two cyclic groups, which is~$2$-generated. 

Here and in the following, let $\diag{x_1,x_2,x_3} \in \GL_3(\QQbar)$ denote a diagonal matrix with~$x_1, x_2, x_3$ (in this order) on the diagonal. 
Let 
$G \leq \GL_3(\QQbar)$ be the algebraic subgroup of diagonal matrices induced by \(I\), that is,
\[G = \{\diag{\bm{a}} : P(\bm{a}) = 0 \text{ for all } P\in I\}.\]
Observe that $G$ is isomorphic to $H_\Lambda$ (recall that the elements of the latter are $\bm{a}$ such that $P(\bm{a}) = 0$). 
By ~\cref{prop:cyclic},
the minimal number of topological generators for~$G$ is 2.
Indeed, $G$ is topologically generated by the diagonal matrices~\(\diag{2,-2,2}\) and \(\diag{-2,2,2}\).

Changing our viewpoint, let us consider $I$
as defining a subvariety $V$ of~$\QQbar^3$ rather than a set of diagonal matrices in $\GL_3(\QQbar)$.
The variety~$V$ can be written as the orbit closure of a vector $\bm{v}$ 
   under the action of the 1-generated (cyclic) group \(\overline{\langle M\rangle}\), where  
    \[M = \begin{pmatrix}
        0& -2& 0\\
        2 & \phantom{-}0 & 0 \\
        0 & \phantom{-}0 & 2
    \end{pmatrix} \quad \text{and} \quad \bm{v} = \begin{pmatrix}
        1\\
        1\\
        1
    \end{pmatrix}.
    \]
\hfill $\blacktriangleleft$
\end{example}

\subsection{Outline and Structure}
The remainder of the paper is structured as follows.  
In~\cref{sec:preliminaries} we give useful notations and definitions (we refer the reader to the Appendix for extended preliminaries).
In \cref{sec:group} we present procedures for the group determination problem (\cref{prop:semisimplematrix,prop:invertiblematrix}).
In \cref{sec:orbitdet} we present procedures for the orbit-closure determination problem (\cref{prop:loopGensemi,pro:loopGen}).
Finally, in \cref{sec:generators} we present two
algorithms that compute a generator in the special case that the input variety is the Zariski closure  of a cyclic group.

\section{Primer on Algebraic Geometry}
In this section we introduce basic notions from algebraic geometry that are needed in the paper. 
We refer the reader to~\cite{cox2015ideals} for
further details and examples.

\label{sec:preliminaries}

\noindent \textbf{Ideals and Varieties.}
Recall that an algebraic number is a complex number that is the root of a univariate polynomial with integer coefficients.
The collection of all algebraic numbers forms a sub-field of $\mathbb C$, which is denoted 
\(\QQbar\).
We write $\QQbar[\bm{x}]$ for the
ring of polynomials in the variables~$\bm{x}=(x_1,\ldots,x_d)$ with coefficients in $\QQbar$.  
A \emph{polynomial ideal}~$I$ is an additive subgroup of~$\QQbar[\bm{x}]$
that is closed under multiplication in $\QQbar[\bm{x}]$.  
Given a  set of polynomials \(S  \subseteq \QQbar[\bm{x}]\), we denote by \(\langle S\rangle\) the ideal \emph{generated by \(S\)}:
\[\langle S \rangle \coloneqq \left\{ P_1 Q_1 + \cdots + P_k Q_k : k \in \NN, Q_1, \ldots, Q_k \in \QQbar[\bm{x}], P_1, \ldots, P_k \in S \right\}.\]
For example, $\langle x_1 \rangle$ is the ideal of all polynomials that are multiples of $x_1$.

An \emph{algebraic set} (or \emph{variety}) is the set of common zeroes of a finite collection of polynomials.
We may also refer to an algebraic set as the \emph{zero locus} of these polynomials.
By Hilbert's basis theorem every polynomial ideal \(I \subseteq \QQbar[\bm{x}]\) is finitely generated, that is, there exists a finite set \(S  \subseteq \QQbar[\bm{x}]\) such that  \(I = \langle S\rangle\).
Thus the zero set of~$I$
    \begin{equation*}
       V(I) \coloneqq   \{\bm{a}\in\QQbar^d : P(\bm{a})=0 \text{ for all } P\in I\} 
       \end{equation*}
is a variety. As an example, consider an ideal~$I$ of the polynomial ring~$\QQbar[x_1, x_2]$ generated by polynomials~$\{x_1^2, x_2^2\}$.
The ideal 
\[I = \langle x_1^2, x_2^2 \rangle = \{x_1^2 Q_1(x_1, x_2) + x_2^2 Q_2(x_1, x_2) : Q_1, Q_2 \in \QQbar[x_1,x_2]\}\]
contains polynomials such as $x_1^2$, $x_1^2 - \sqrt{2}x_2^2$, and $x_1^3+x_1^2x_2 + x_1x_2^2 - x_2^3$.
The zero set of $I$ 
is the variety
\[V(I) \coloneqq  \{(a_1,a_2) \in\QQbar^2 : P(a_1,a_2) = 0 \text{ for all } P\in I\} = 
\{(a_1,a_2) \in\QQbar^2 : a_1^2 = 0 \wedge a_2^2 = 0\}.\]
Dually, the set of polynomials that vanish on a set $E \subseteq \QQbar^d$, denoted
\[
    I(E)\coloneqq \{P \in \QQbar[\bm{x}] :  P(a) = 0 \text{ for all } a \in E\},
\]
is an ideal in $\QQbar[\bm{x}]$ and is referred to as the
\emph{vanishing ideal} of $E$.
Resuming our example, we can consider $E = V(I)$ for $I = \langle x_1^2 , x_2^2 \rangle$. The vanishing ideal of this variety is
\begin{multline*}
I(V(I))\coloneqq \{P \in \QQbar[x_1,x_2] :  P(a_1, a_2) = 0 \text{ for all } (a_1,a_2) \in V(I)\}
\\ =\{P \in \QQbar[x_1,x_2] :  P(a_1, a_2) = 0 \text{ for all } (a_1,a_2) \text{ such that } a_1^2 = a_2^2 = 0\} = \langle x_1, x_2 \rangle.
\end{multline*}
The ideal $I(V(I))$ may, in general, be different from $I$.

\noindent \textbf{Zariski Topology.}
A \emph{topology} on a set~$X$ is a collection~$\tau$ of subsets of~$X$, called \emph{closed sets}, that satisfy the following axioms:
\begin{enumerate}
    \item $\varnothing, X \in \tau$;
    \item An arbitrary intersection of a collection of sets in~$\tau$ belongs to~$\tau$;
    \item Any finite union of sets in~$\tau$ belongs to~$\tau$.
\end{enumerate}

\color{black} The \emph{Zariski topology} on 
 \(\QQbar^d\) has as its closed sets 
 the varieties in \(\QQbar^d\).
Given a set~\(E\subseteq \QQbar^d\), we denote by~\(\overline{E}\) the \emph{closure} of~\(E\) in the Zariski topology, that is, 
the smallest algebraic set that contains~\(E\). 
{We say that a set $E \subseteq \QQbar^d$ is dense in an algebraic set $V$, if~$\overline{E} =V$.}

A closed set $A \subseteq \QQbar^d$ is \emph{irreducible} if it cannot be written as the union of two proper closed subsets.  A maximal irreducible closed subset of $A$ is called 
an \emph{irreducible component} of $A$. 
By {Hilbert's basis theorem}
every closed set $A$ can be written as a finite union of its irreducible components.
For example, the set $A = \{ \boldsymbol a \in \QQbar^2 : a_1a_2=0 \}$ has irreducible components 
$A_1 = \{ \boldsymbol a \in \QQbar^2 : a_1=0\}$ and $A_2 = \{ \boldsymbol a \in \QQbar^2 : a_2=0\}$.
Given a closed set~$E$, denote by $\dim E$ the dimension of the variety~$E$, that is, 
the maximal length of a strictly decreasing  chain of nonempty irreducible subvarieties of $E$. We define the dimension of an arbitrary set to be the dimension of
its closure. 

Given an ideal $I\subseteq \QQbar[X]$ over the variables  $X=\{x_{i,j}\}_{1\leq i,j\leq d}$, and matrix $M \in \QQbar^{d\times d}$, we write 
$M\cdot I$ for the ideal $\{P(MX) \in \QQbar[X] : P \in I\}$.
Clearly, $V(M\cdot I)=\{ A \in \QQbar^{d\times d}: MA \in V(I) \}$.

\noindent \textbf{Linear Algebraic Groups.}
A matrix \(M\in \QQbar^{d\times d}\) is  
\emph{nilpotent} if \(M^n=0\) for some $n\in \NN$. 
It is
\emph{unipotent} if \(M - \textrm{Id}_d\) is nilpotent, and 
\emph{semisimple} if it is diagonalisable over~\(\QQbar\).
 The matrix \(M\in \QQbar^{d\times d}\) is called 
 \emph{upper triangular}  if all entries below the main diagonal are zero.
We use the term \emph{upper unitriangular} to refer to an upper triangular matrix whose entries along the main diagonal are all ones.

Write  $\GL_d(\QQbar)$
for the group of  invertible $d\times d$ matrices with entries in $\QQbar$.
We identify $\mathrm{GL}_d(\QQbar)$ with the variety 
\[\bigl\{(M,y)\in \QQbar^{d\times d}\times \QQbar: \det(M)\cdot y=1\bigr\}\,.\]
Under this identification, matrix multiplication is a polynomial map $\GL_d(\QQbar)\times \GL_d(\QQbar) \rightarrow \GL_d(\QQbar)$, and, by Cramer's rule, matrix inversion is also a polynomial map $\GL_d(\QQbar) \rightarrow \GL_d(\QQbar)$. 
A \emph{linear algebraic group} (or algebraic matrix group) $G$ is a Zariski-closed subgroup of $\GL_d(\QQbar)$.
In other words, a subgroup~$G \leq \GL_d(\QQbar)$ is algebraic if it can be defined by polynomial equalities. 
As an example take the subgroup of diagonal matrices in $\GL_d(\QQbar)$,  defined by 
\[\{M =(m_{ij})_{1\leq i, j \leq d} \in \GL_d(\QQbar): \; \; m_{ij} = 0 \text{ for all } i \neq j \}.\]

Denote by  $G_s$  the subset of semisimple  matrices in $G$, and by $G_u$ the subset of unipotent matrices. The following fact is well-known, see for example~\cite[Chapter 6]{HumphreysLAG}.

\begin{fact}
\label{fact:decomp}
    For a commutative algebraic group \(G\leq \GL_d(\QQbar)\), the algebraic subgroups \(G_s\) and \(G_u\) form algebraic subgroups {of $G$}; moreover, we have the decomposition of \(G\) into $G_u \cdot G_s$.
\end{fact}

The next fact follows from Chevalley's Theorem; see~\cite[Chapter AG, Corollary 10.2]{borel1991} and~\Cref{app:proofs} for details. 
\begin{restatable}{fact}{factEucZar}
\label{fact:EucZar}
    Let $G \leq \GL_d(\QQbar)$ be an  algebraic group and $\bm{v}\in \QQbar^d$ a vector, the Zariski and Euclidean closures of the orbit $G \cdot \bm{v}$ coincide.
\end{restatable}

We say that $G$ is \emph{topologically generated by}
$S\subseteq \GL_d(\QQbar)$ if $G$ is the smallest Zariski-closed subgroup of 
$\GL_d(\QQbar)$ that contains $S$, that is, $G=\overline{\langle S\rangle}$.
If $G$ is topologically generated by a set with $s$ elements then we say that is
$G$ is \emph{$s$-generated}.

\begin{restatable}{proposition}{propgen}
	\label{prop:gen}
	Let $G \leq \GL_d(\QQbar)$ be an algebraic group, then $G$ is topologically generated by a finite set of matrices.
\end{restatable}
The proof of \cref{prop:gen} is given in \cref{app:proofs}.

\noindent \textbf{Lattices and the Multiplicative Group $\mathbb{G}_m^d$.}
The \emph{rank} of an abelian group \(\Lambda\) is the size of a maximal linearly independent subset~\cite{Lang2002}. 
A subgroup \(\Lambda\subseteq \Z^d\) is called a \emph{lattice} (and has rank at most~$d$).
The set~$\Z^d/\Lambda$ contains all cosets of~$\Lambda$ in~$\Z^d$, that is,
sets of the form $\bm{g} + \Lambda = \{\bm{g}+\bm{v} : \bm{v} \in \Lambda\}$ where $\bm{g} \in \Z^d$. 
Together with an additive operation, the set of cosets defines the~\emph{quotient group}~$\Z^d/\Lambda$.
The \emph{torsion subgroup} of~$\ZZ^d / \Lambda$ is the subgroup of $\ZZ^d / \Lambda$ consisting of all its elements of finite order.

The $d$-dimensional multiplicative group 
over $\QQbar$ is defined as
 \begin{gather}
 \mathbb{G}_m^d = \mathbb{G}_m^d(\QQbar) \coloneqq   \left\{\boldsymbol{a} \in \QQbar^{d} : a_1\cdots a_d
   \neq 0 \right\}.
   \label{eq:MULT-INEQ}
 \end{gather}
   Here the subscript $m$ stands for \emph{multiplicative}.
Evidently, this is a commutative group with respect to pointwise multiplication.
We identify $\mathbb{G}_m^d$ with the subgroup of diagonal matrices in $\GL_d(\QQbar)$ via the map $\Delta$
that sends $(a_1,\ldots,a_d) \in \mathbb{G}_m^d$
to the diagonal matrix $\Delta(a_1,\ldots,a_d) \in \GL_d(\QQbar)$
which has $a_1, \dots, a_d$ (in this order) on the diagonal and zeros elsewhere.

We can now explain two important ingredients of our results.
We have so far identified the invertible diagonal matrices with the algebraic group~$\mathbb{G}_m^d(\QQbar)$.
Now, we define the necessary vocabulary to discuss the correspondence of algebraic subgroups of~$\mathbb{G}_m^d(\QQbar)$ and lattices in~$\ZZ^d$. 
This comes in handy when we need to decide whether a certain subgroup of diagonal invertible matrices is~$s$-generated, in which we reduce it to a lattice problem.

Given a lattice $\Lambda \subseteq \mathbb{Z}^d$, define
\[H_\Lambda \coloneqq   \{ \boldsymbol{a} \in \mathbb{G}_m^d : \enspace \forall
   \boldsymbol{v}\in \Lambda. \; a_1^{v_1}\cdots a_d^{v_d}=1 \}.\]%
   The map $\Lambda \mapsto H_{\Lambda}$ is an isomorphism
 between lattices and algebraic subgroups
 of $\mathbb{G}_m^d$.  
 This implies that $\mathbb{G}_m^d$ is topologically generated by any $d$-tuple $(g_1,\ldots,g_d)$ of multiplicatively
independent elements of $\QQbar$.

For variables $\bm{x}=(x_1,\ldots,x_d)$, write $\bm{x}^{\bm{\lambda}}$ for the monomial $\prod_{i=1}^d x_i^{\lambda_i}$. A \emph{pure binomial ideal} is 
an ideal generated by polynomials of the form $\bm{x}^{\bm{\lambda}}-\bm{x}^{\bm{\lambda'}}$, where $\bm{\lambda}$ and $\bm{\lambda'}$ are non-negative integer vectors.  More generally, a \emph{binomial ideal} is one that is generated by polynomials of the form~$\bm{x}^{\bm{\lambda}}-\theta\bm{x}^{\bm{\lambda'}}$, where $\theta \in \QQbar$.
It is known that
 the vanishing ideal 
$I\subseteq \QQbar[\bm{x}]$ of an algebraic subgroup of 
$\mathbb{G}_m^d$ is a pure binomial ideal.

\begin{example} \label{ex:lattice}

Let $\Lambda \coloneqq \{(2n, -n) : n \in \ZZ \} = (2,-1)\cdot \ZZ$ be a lattice in $\ZZ^2$ of rank~1. 

A vector $(a,b) \in \ZZ^2$ is equivalent to $(a+2b,0)$ modulo $\Lambda$.  Hence 
the quotient group~$\Z^2/\Lambda$ comprises a set of cosets $\{(c,0) + \Lambda: c \in \Z\}$ with group operation defined by componentwise addition. 
Thus $\Z^2/\Lambda$ is isomorphic to the additive group~$\ZZ$, which has trivial torsion subgroup since~$0$ is the only element of finite order.

We have \begin{align*}
    H_\Lambda = &  \{ (x,y) \in \mathbb{G}_m^2 : \enspace \forall
   n \in \ZZ. \; x^{2n}y^{-n}=1\} \\= 
   & \{ (x,y) \in \mathbb{G}_m^2 : \enspace x^2=y\}.
\end{align*}

   The vanishing ideal of~$H_\Lambda$ is pure binomial, generated by $x^2-y$.
\hfill $\blacktriangleleft$
\end{example}

The following proposition shows how to recover the generators of the lattice
and hence the pure binomial ideal that vanishes on an algebraic subgroup of $\mathbb{G}_m^d$ 
defined by its equations.
 \begin{proposition}[{\cite[Proposition~3.2.14]{BombieriGubler}}]
     Let $G$ be a subgroup of $\mathbb{G}_m^d$ defined by polynomial equations $ \sum_{\bm{\lambda} \in \mathcal{L}_i }a_{i,\boldsymbol{\lambda}}\boldsymbol{x}^{\boldsymbol{\lambda}}=0$ for $i=1,\ldots,m$, where 
     $\mathcal{L}_i \subseteq \mathbb{Z}^d$.
     Then $G=H_\Lambda$, where $\Lambda \subseteq \mathbb{Z}^d$ is generated by
vectors of the form $\bm{\lambda}_i - \bm{\lambda'}_i$
with $\bm{\lambda}_i, \bm{\lambda'}_i \in \mathcal{L}_i$, for $i \in \{1, \dots, m\}$.
 \label{prop:bound}
 \end{proposition}

The next proposition follows from  a standard result in Diophantine geometry concerning the number of generators of a subgroup of $\mathbb{G}_m^d$ (cf.~\cite[Chapter 3]{BombieriGubler});
we give a proof in \cref{app:proofs}.

\begin{restatable}{proposition}{propcyclic}
\label{prop:cyclic}
 Let $\Lambda \subseteq \ZZ^d$ be a lattice of rank $r$. Then
\begin{enumerate}
     \item \label{itemprop:1}  the torsion subgroup of $\ZZ^d / \Lambda$ is $s_0$-generated, for some $s_0 \leq r$, and
     \item \label{itemprop:2} 
$H_\Lambda$ is $s$-generated, where $s\coloneqq s_0$ if $\Lambda$ has full rank and otherwise $s\coloneqq\max(s_0,1)$.
Furthermore, $s$ is the minimal number of topological generators for~$H_\Lambda$.
 \end{enumerate}
\end{restatable}

\noindent \textbf{Unipotent Matrices.}
For a $d\times d$ unipotent matrix $A$ and a $d\times d$ nilpotent matrix $B$, define 
\begin{equation*} \log(A) \coloneqq  \sum_{k=1}^{d-1} (-1)^{k+1}\frac{(A-\textrm{Id}_d)^k}{k} 
\quad \text{and} \quad
\exp(B) \coloneqq   \sum_{k=0}^{d-1} \frac{B^k}{k!}.
\end{equation*}
Let \(G\subseteq \GL_d(\QQbar)\) be a commutative subgroup of unipotent matrices.
The set  \(L\coloneqq   \{\log(A) : A\in G\}\) is a linear subspace of 
$\QQbar^{d^2}$
consisting of nilpotent matrices \cite[Chapter II, Section 7.3]{borel1991}. %
Moreover, \(\exp\colon L\to G\) and \(\log\colon G\to L\) yield polynomial isomorphisms between  \(L\) and   \(G\) as algebraic groups, with  additive and multiplicative group structures respectively.
Taken together, these observations lead to the following proposition.

\begin{proposition}
Let \(G\) be a  commutative group
of unipotent matrices and \(L\coloneqq\{ \log(A): A \in G\}\) the associated linear subspace of nilpotent matrices.
Then \(G\) has a topological generator of cardinality \(s\) if and only if \(L\) is spanned by a set of \(s\) matrices
as a $\QQbar$-vector space.
\label{prop:uni}
\end{proposition}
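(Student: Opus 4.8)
The plan is to exploit the polynomial isomorphisms $\exp\colon L \to G$ and $\log\colon G \to L$ supplied by the cited results of Borel, together with the fact that these maps are compatible with the group/additive structures: since $G$ is commutative, $\log(AB) = \log(A) + \log(B)$ for $A, B \in G$, and dually $\exp(X+Y) = \exp(X)\exp(Y)$ for $X, Y \in L$. Thus $\exp$ is an isomorphism of the algebraic group $(L, +)$ onto $(G, \cdot)$, and under this isomorphism the scalar-multiplication action $t \cdot X := \exp(t \log(\,\cdot\,))$ makes $G$ into a $\QQbar$-vector space via transport of structure from $L$. The strategy is therefore to show that topological generation of $G$ by $s$ matrices corresponds exactly to $\QQbar$-linear spanning of $L$ by the $s$ corresponding logarithms.

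First I would argue the forward direction. Suppose $L = \operatorname{span}_{\QQbar}\{N_1, \ldots, N_s\}$ with $N_i$ nilpotent (necessarily $N_i \in L$, or after replacing a spanning set by one contained in $L$, which is harmless since $L$ is a linear space). Set $U_i := \exp(N_i) \in G$. I claim $G = \overline{\langle U_1, \ldots, U_s\rangle}$. Indeed, $\overline{\langle U_1, \ldots, U_s\rangle}$ is a Zariski-closed subgroup of the commutative unipotent group $G$, hence is itself a commutative unipotent group; under the isomorphism $\log$ it corresponds to a Zariski-closed subgroup $L'$ of $(L,+)$. A Zariski-closed subgroup of a vector group $(L,+)$ is a linear subspace (it is connected — unipotent groups are connected — and closed under addition, so closed under $\QQbar$-scaling by Zariski density of $\mathbb{Z}$). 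Since $L'$ contains each $N_i = \log(U_i)$, it contains their span, which is all of $L$; hence $L' = L$ and the subgroup is all of $G$.

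For the converse, suppose $G = \overline{\langle U_1, \ldots, U_s\rangle}$. Put $N_i := \log(U_i)$ and let $L' := \operatorname{span}_{\QQbar}\{N_1, \ldots, N_s\} \subseteq L$. Then $\exp(L')$ is a linear subspace of $L$ exponentiated, hence (by the same correspondence) a Zariski-closed subgroup of $G$ containing each $U_i$; since the $U_i$ topologically generate $G$, we get $\exp(L') = G$, and applying $\log$ gives $L' = L$. So $L$ is spanned by $s$ matrices. The main obstacle — really the only nontrivial point — is the lemma that a Zariski-closed additive subgroup of the vector group $(L,+)$ is a $\QQbar$-linear subspace; this is standard (such a group is connected, and a connected algebraic subgroup of $\mathbb{G}_a^{\dim L}$ is linear), and I would either cite it from \cite{borel1991} or \cite{HumphreysLAG} or give the one-line argument via Zariski-density of $\mathbb{Z}$ in $\mathbb{G}_a$. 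Everything else is bookkeeping with the $\exp$/$\log$ dictionary already recorded in the excerpt.
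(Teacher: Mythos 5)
Your proof is correct and follows essentially the same route as the paper's: both transport the problem to $L$ via the $\exp$/$\log$ group isomorphism (using commutativity so that $\log(AB)=\log(A)+\log(B)$) and reduce topological generation to the fact that the Zariski closure of the $\mathbb{Z}$-span of the logarithms — equivalently, any Zariski-closed additive subgroup of the vector group $(L,+)$ — is their $\QQbar$-linear span. The paper compresses this into a single chain of equivalences while you spell out the closed-subgroup correspondence in both directions, but the key lemma and mechanism are identical.
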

\begin{proof}
For all \(A_1,\ldots, A_s\in G\) we have the following equivalences:  
\begin{align*}
  & \, \{A_1,\ldots, A_s\} \text{ topologically generates }  G, \\
 \Leftrightarrow \, &\, \{A_1^{n_1} A_2^{n_2} \cdots A_s^{n_s} : n_1,\ldots, n_s \in \Z\} \text{ is dense in } G, \\
 \Leftrightarrow \, &\, \{\textstyle \sum_{i=1}^s n_i\log(A_i)  : n_1,\ldots, n_s \in \Z\} \text { is dense in } L, \\
 \Leftrightarrow \, &\, \{\textstyle \sum_{i=1}^s t_i\log(A_i) : t_1, \ldots, t_s\in\QQbar\} = L, %
\end{align*}
as desired.
\end{proof}

\begin{example}
We consider $2 \times 2$ matrices of the form
\[A_t \coloneqq \begin{pmatrix}
        1& t\\
        0 & 1
    \end{pmatrix} \, ,\]
    where $t \in \QQbar$. Since $(A_t - \textrm{Id}_2)^2 = 0$, the matrix $A_t$ is unipotent.
Moreover, $G \coloneqq \{A_t : t \in \QQbar \}$ is a (multiplicative) commutative subgroup of~$\GL_2(\QQbar)$.
The matrix \[
    \log(A_t) = A_t - \textrm{Id}_2 = \begin{pmatrix}
        0& t\\
        0 & 0
    \end{pmatrix}\]
is nilpotent for any~$t \in \QQbar$.
    It further holds that \[L \coloneqq \{\log(A_t) : t\in \QQbar\} = \left\{\begin{pmatrix}
        0& t\\
        0 & 0
    \end{pmatrix} : t\in \QQbar\right\}
    \]
    is a linear subspace of~$\QQbar^4$.
    In fact, $L$ is spanned by a single matrix over~$\QQbar$ and so, by \cref{prop:uni}, 
    $G$ is topologically 1-generated.
    Indeed, $G$ is the Zariski closure of, say, \[\langle A_1 \rangle  = 
    \left\langle \begin{pmatrix}
        1 & 1\\
        0 & 1
    \end{pmatrix} \right\rangle = 
    \left\{ \begin{pmatrix}
       1& n\\
        0 & 1
   \end{pmatrix} : n \in \ZZ \right\}.\]
The map $\exp$ is an isomorphism of $L$ and $G$:
\[
\exp : \begin{pmatrix}
        0& t\\
        0 & 0
    \end{pmatrix}
    \mapsto
    \textrm{Id}_2 + \begin{pmatrix}
        0& t\\
        0 & 0
    \end{pmatrix} = \begin{pmatrix}
        1& t\\
        0 & 1
    \end{pmatrix}.
\] 
\hfill $\blacktriangleleft$
\end{example}

\section{Commutative Group Determination}
\label{sec:group}

Recall that the group determination problem with parameters~$(s,d,m,b)$ asks,  given $s \in \N$ and a family 
 of $m$ polynomials in~$\QQ[\{x_{i,j}\}_{1\leq i,j\leq d}]$, each
 of total degree at most~$b$,
 to determine whether their zero locus $Z\subseteq\GL_d(\QQbar)$ is an $s$-generated algebraic matrix group.
In this section, we first demonstrate a procedure for this problem subject to the 
constraint that the underlying group is semisimple and commutative~(\cref{prop:semisimplematrix}).
Next, we  generalise this result by lifting the requirement that the matrices are semisimple~(\cref{prop:invertiblematrix}).

\begin{proposition}
 \label{prop:semisimplematrix}
The group determination problem for commutative semisimple matrices with   parameters $(s,d,m,b)$ can be decided
 in time 
   $(mb)^{\mathrm{poly}(d)}$, and in space bounded by~$(d \log{b})^{O(1)}$.
   \end{proposition}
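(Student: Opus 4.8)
We want to decide whether the zero locus $Z\subseteq\GL_d(\QQbar)$ of the input polynomials is an $s$-generated commutative semisimple matrix group. The strategy is to reduce to a statement about lattices and the toric structure of $Z$, and then express the whole question as a bounded-alternation sentence over $\QQbar$ so that \cref{thm:FO} and the space bound of \cite[Remark~13.11]{basu2006algorithms} apply.

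\textbf{Step 1: simultaneously diagonalise.} If $Z$ is a commutative group of semisimple matrices, then by standard theory its elements are simultaneously diagonalisable: there is $P\in\GL_d(\QQbar)$ such that $PZP^{-1}$ is an algebraic subgroup of the diagonal torus $\mathbb{G}_m^d$. So $Z$ is an $s$-generated commutative semisimple group iff there exists $P\in\GL_d(\QQbar)$ with $PZP^{-1}=H_\Lambda$ for some subgroup $\Lambda\subseteq\Z^d$ whose torsion quotient $\Z^d/\Lambda$ is $s$-generated (using \cref{prop:cyclic} for the equivalence between $s$-generation of $H_\Lambda$ and $s$-generation of the torsion subgroup). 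The conjugating matrix $P$ is guessed by existential quantifiers.

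\textbf{Step 2: bound the lattice.} To make the guessing of $\Lambda$ effective we need a degree bound on a generating set of the vanishing ideal of $PZP^{-1}$. Since $Z$ is cut out by polynomials of total degree at most $b$, the ideal of $PZP^{-1}$ is generated in degree at most $b$ as well (the conjugation $f\mapsto f(P^{-1}\bm{x}P)$ preserves total degree). By \cref{prop:bound}, the lattice $\Lambda$ associated to the binomial ideal of an algebraic subgroup of $\mathbb{G}_m^d$ is generated by difference-of-exponent vectors coming from monomials of the generators; hence $\Lambda$ is generated by integer vectors with entries of absolute value at most $b$. So the algorithm enumerates candidate lattices $\Lambda\subseteq\Z^d$ spanned by at most $d$ such vectors — at most $(2b+1)^{d^2}$ choices — and checks, via Smith normal form, that $\Z^d/\Lambda$ has torsion subgroup $s$-generated (equivalently, at most $s$ elementary divisors $\neq 1$, via \cref{prop:cyclic}).

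\textbf{Step 3: the first-order sentence.} For each guessed $\Lambda$, write down the finitely many pure binomials $x^{\bm\alpha}-x^{\bm\beta}$ ($\bm\alpha-\bm\beta$ ranging over a generating set of $\Lambda$, with $\bm\alpha,\bm\beta$ taken nonnegative) defining $H_\Lambda\subseteq\mathbb{G}_m^d$, and form the sentence
\[
\exists P\in\GL_d\ \bigl(\, \forall A\ (A\in Z \leftrightarrow PAP^{-1}\in H_\Lambda)\,\bigr),
\]
where membership in $Z$ is "$f_1(A)=\dots=f_m(A)=0$" and membership in $H_\Lambda$ is the conjunction of the binomial equations together with invertibility of the diagonal entries. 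Pull the $\exists P$ to the front and the $\forall A$ after it: this is an $\exists^*\forall^*$ sentence in $O(d^2)$ variables mentioning $O(m+d)$ polynomials of degree $\max(b, d)=\mathrm{poly}(d,b)$ (conjugation by $P$ introduces degree through $P^{-1}=\mathrm{adj}(P)/\det P$, which one clears by multiplying through, keeping degrees polynomially bounded), with a bounded number of quantifier alternations. The algorithm answers "yes" iff for some guessed $\Lambda$ the sentence holds. By \cref{thm:FO} each such check runs in time $(mb)^{\mathrm{poly}(d)}$; multiplying by the $(2b+1)^{d^2}$ choices of $\Lambda$ keeps the total time $(mb)^{\mathrm{poly}(d)}$, and the space bound $(d\log b)^{O(1)}$ follows from \cite[Remark~13.11]{basu2006algorithms} since $\Lambda$ can be cycled through in polynomial space.

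\textbf{Main obstacle.} The delicate point is Step 2 — certifying that one only needs to search over lattices with small generators, i.e.\ that the degree bound $b$ on the \emph{given} defining polynomials of $Z$ transfers to a generating set of the \emph{radical/vanishing} binomial ideal of $PZP^{-1}$. This needs the fact that the vanishing ideal of a subgroup of $\mathbb{G}_m^d$ is generated by the pure binomials read off from \emph{any} defining set (\cref{prop:bound}) together with care that $Z$ really is (after conjugation) all of $H_\Lambda$ and not a proper subvariety — which is exactly what the $\forall A$ half of the sentence enforces. A secondary nuisance is bookkeeping the degree blow-up from conjugation by the guessed $P$; this is handled by clearing denominators ($\det P$) so that all polynomials in the sentence have degree polynomial in $d$ and $b$, leaving the complexity bounds intact.
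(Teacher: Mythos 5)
Your proposal is correct and follows essentially the same route as the paper's proof: simultaneous diagonalisation via a guessed conjugating matrix $P$, use of \cref{prop:bound} to bound the generators of the lattice $\Lambda$ by the input degree $b$, enumeration of the $(2b)^{O(d^2)}$ candidate lattices with at most $s$ elementary divisors different from one (via \cref{prop:cyclic}), and an $\exists^*\forall^*$ sentence over $\QQbar$ decided by \cref{thm:FO}. Your additional remarks on clearing the $\det P$ denominator and on why the degree bound transfers under conjugation are correct elaborations of points the paper leaves implicit.
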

   \begin{proof}
 Recall that the input to the problem consists of~$m$ polynomials, each of total degree at most~$b$, together with a natural number $s$.  Let $Z$ be the subvariety  of $\GL_d(\QQbar)$ defined by the input polynomials. 
   The task is to determine whether  $Z$ is a group that is topologically generated by at most $s$  commutative semisimple  matrices.

Suppose that the input is a positive instance of the problem, meaning that  
 $Z$ is an algebraic group that is topologically generated by 
 commutative semisimple  matrices~$M_1, \ldots, M_s \in \GL_d(\overline{\Q})$.   
  Recall that  commutative semisimple matrices are simultaneously diagonalisable~\cite[Theorem 1.3.21]{horn2012matrix}.
       Thus there exists $P \in \GL_d(\overline{\Q})$
       such that $D_i \coloneqq P^{-1} M_i P$ are diagonal.
       Let $G$ be the subgroup of
        $\mathbb{G}_m^d$ defined by 
        \[G\coloneqq \{ g\in \mathbb{G}_m^d : \Delta(g) \in \overline{\langle D_i: 1\leq i \leq s\rangle}\}\,.\]
Since 
$Z=\overline{\langle M_i: 1\leq i \leq s\rangle}$, we have
   $P^{-1}ZP =\{ \Delta(g) : g \in G\}$.
   Moreover, as 
    \(Z\) is the zero  set of polynomials of total degree at most \(b\) and   conjugation by the linear transformation \(P\) does not increase the total degrees of the transformed polynomials, it holds that 
   $P^{-1}ZP$ is the zero set
    of polynomials of degree at most $b$. 
    Hence,  
   by \cref{prop:bound}, 
    the group $G$ has the form $H_\Lambda$ for some lattice $\Lambda\subseteq \mathbb{Z}^d$
whose generators have norm at
most $b$  and such that $H_\Lambda$ is $s$-generated.

Conversely, for every such  lattice~$\Lambda$, the  variety $P \{ \Delta(g) : g \in H_{\Lambda}\} P^{-1} $ is topologically $s$-generated by commutative semisimple matrices.

The above reasoning shows the correctness of the following nondeterministic decision  
procedure:
\begin{enumerate}
    \item Guess a lattice $\Lambda\subseteq\mathbb{Z}^d$ whose generators have norm at most $b$ such that $H_\Lambda$ is  $s$-generated. 
    \item Output "yes" %
    if there exists $P \in \mathrm{GL}_d(\QQbar)$ such that 
    $P^{-1}ZP = \{ \Delta(g) : g \in H_\Lambda \}$.
\end{enumerate}

By~\cref{prop:cyclic}, Step 1 amounts to guessing~$\Lambda$ subject to the condition that the torsion subgroup of~$\ZZ^d/\Lambda$ is generated by at most~$s$ elements.
We can understand this observation in terms of the  number of non-unit elementary divisors of the desired lattice~$\Lambda$; see~\cref{app:proofs} for more details.

Step 2 amounts to checking the truth in $\QQbar$ of the sentence
\begin{multline*}
   \exists P \in \GL_d(\QQbar) \; \forall A = (a_{ij})_{1\leq i, j \leq d} \in \GL_d(\QQbar)  \;\\
\left(P^{-1}AP \in Z \Leftrightarrow \textstyle \bigwedge_{i\neq j} a_{ij}=0 \wedge  (a_{11}, a_{22},\ldots,a_{dd}) \in H_\Lambda\right),
\end{multline*}
with respect to the theory of algebraically closed fields.  
By \cref{thm:FO}, this can be done in time $(mb)^{\mathrm{poly}(d)}$.
The claimed running time for the overall procedure follows from
the fact that the number of possibilities for the lattice $\Lambda$ is at most~$(2b)^{d^2}$. 
   \end{proof}

By the following remark, a commutative algebraic group~$G$ is $s$-generated if its unipotent subgroup~$G_u$ and semisimple subgroup~$G_s$ are both $s$-generated. 

\begin{remark}
\label{Theremark}
    Let $U = \overline{\langle U_i: 1\leq i \leq s \rangle}$ be a commutative unipotent $s$-generated algebraic group and $S = \overline{\langle S_i :1\leq i \leq s \rangle}$ a semisimple commutative $s$-generated algebraic group. Suppose moreover that the matrices $\{U_i, S_i: 1\leq i \leq s \}$ are  commutative. 
    Then we have 
   \( \overline{ \langle U_i, S_i: 1\leq i \leq s  \rangle }= \overline{\langle S_iU_i: 1\leq i \leq s  \rangle}
   \).
   This relies on the fact that if
$A\in \GL_d(\QQbar)$ is semisimple and 
$B \in \GL_d(\QQbar)$ is unipotent, then both $A$ and $B$ lie in the Zariski closure of the subgroup generated by their product $AB$; see~\cite[Section 15.3]{HumphreysLAG}.
\end{remark}

The next proposition generalises the procedure described in \cref{prop:semisimplematrix}.
In \cref{prop:invertiblematrix} we consider the group determination problem for \(s\)-generated commutative algebraic groups.
Key to our generalisation is the determination of a matrix \(P\in\GL_d(\QQbar)\) and properties associated with the semisimple and unipotent subgroups of the group \(P^{-1}ZP\).

\begin{proposition} 
\label{prop:invertiblematrix}
The group determination problem for commutative matrices with  parameters \((s,d,m,b)\) can be decided in time 
   $(mb)^{\mathrm{poly}(d)}$, and in space bounded by~$(d \log{b})^{O(1)}$. 
 \end{proposition}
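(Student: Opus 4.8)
The plan is to combine the semisimple case (\cref{prop:semisimplematrix}) with the structural decomposition of a commutative algebraic group into its unipotent and semisimple parts. Suppose $Z = \overline{\langle M_i : 1 \leq i \leq s\rangle}$ for pairwise-commuting matrices $M_i$. Writing each $M_i = S_i U_i$ for its (commuting) semisimple and unipotent Jordan components, the group $G := \overline{\langle M_i\rangle}$ is commutative, and by \cref{Theremark} we have $G = \overline{\langle S_i, U_i : 1 \leq i \leq s\rangle}$ with $G_s = \overline{\langle S_i\rangle}$ and $G_u = \overline{\langle U_i\rangle}$; conversely, if $G_s$ and $G_u$ are each $s$-generated by commuting families, \cref{Theremark} shows $G$ is $s$-generated. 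So the task reduces to guessing a change of basis $P$ that simultaneously simplifies both parts, and then checking, in the first-order theory of $\QQbar$, that $Z$ decomposes as a product of a guessed torus-like group (handled exactly as in \cref{prop:semisimplematrix} via a lattice $\Lambda$) with a guessed commutative unipotent group. By \cref{prop:uni}, the unipotent part $G_u$ is $s$-generated iff its associated space of nilpotent logarithms $L$ is spanned by $s$ matrices, so it suffices to quantify existentially over $s$ nilpotent matrices $N_1, \dots, N_s$ and require that $\{\exp(\sum t_i N_i) : t_i \in \QQbar\}$ is precisely the unipotent part of $P^{-1}ZP$.

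The key steps, in order, are as follows. First, observe that $P^{-1}ZP$ is still cut out by polynomials of degree at most $b$, and — after conjugation to put $G_s$ into diagonal form — the semisimple subgroup $(P^{-1}ZP)_s$ is an algebraic subgroup of $\mathbb{G}_m^d$, so by \cref{prop:bound} it equals $H_\Lambda$ for a lattice $\Lambda$ generated by vectors of supremum norm at most $b$; by \cref{prop:cyclic}, the $s$-generation constraint means $\Lambda$ has at most $s$ elementary divisors different from $1$. Second, the algorithm nondeterministically guesses such a lattice $\Lambda$ (at most $(2b)^{d^2}$ choices). Third, it writes down the first-order sentence asserting: there exist $P \in \GL_d(\QQbar)$ and nilpotent matrices $N_1, \dots, N_s$ that pairwise commute and commute with $\Delta(H_\Lambda)$, such that for all $A \in \GL_d(\QQbar)$, $A \in P^{-1}ZP$ if and only if $A = \exp(\sum_{i=1}^s t_i N_i)\,\Delta(g)$ for some $t_1, \dots, t_s \in \QQbar$ and some $g \in H_\Lambda$. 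The matrix exponential $\exp$ is a polynomial map on nilpotents (the degree-$d$ truncation in the excerpt), and membership in $H_\Lambda$ is a conjunction of binomial equations, so this is a genuine sentence in the language of rings; it has the quantifier prefix $\exists^* \forall^* \exists^*$, hence a bounded number of alternations. Fourth, invoke \cref{thm:FO} to decide truth of this sentence in time $(mb)^{\mathrm{poly}(d)}$, and loop over all lattice guesses; correctness follows because $Z$ is an $s$-generated commutative group iff some such guess succeeds, using \cref{Theremark} for the forward direction and the fact that $\exp(\sum t_i N_i)\,\Delta(g)$ ranges over exactly $\overline{\langle U_i, S_i\rangle}$ for the reverse.

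For the complexity claim: there are at most $(2b)^{d^2}$ lattices to try, and each associated sentence is decided in time $(mb)^{\mathrm{poly}(d)}$ by \cref{thm:FO}, giving the stated time bound; for the space bound, one reuses the observation (following~\cite[Remark~13.11]{basu2006algorithms}) that such bounded-alternation sentences can be decided in space $(d\log b)^{O(1)}$, and the loop over lattices can be carried out reusing space, with each lattice describable in $\mathrm{poly}(d, \log b)$ bits.

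The main obstacle is verifying that the decomposition $A = \exp(\sum t_i N_i)\,\Delta(g)$ faithfully captures $\overline{\langle S_i U_i\rangle}$ rather than something larger or smaller — i.e., that the \emph{Zariski closure} of the group generated by the products equals the set-theoretic product of the (closed) unipotent and semisimple parts. This is exactly where commutativity is essential: for a commutative group the Jordan decomposition is compatible with the group structure, $G = G_u \times G_s$ as algebraic groups, and \cref{Theremark} plus \cref{prop:uni} license replacing "topological generation by $s$ elements" with "the nilpotent log-space is $s$-dimensional as a span" and "the torsion of $\mathbb{Z}^d/\Lambda$ is $s$-generated." Making the interplay between the guessed $P$ (which must diagonalise the semisimple part \emph{and} be compatible with the nilpotent generators of the unipotent part) fully rigorous, while keeping everything inside a bounded-alternation first-order sentence, is the crux; the remaining bookkeeping about norms, degrees, and the count of lattice guesses is routine and mirrors \cref{prop:semisimplematrix}.
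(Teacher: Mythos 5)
Your proposal is correct and follows essentially the same route as the paper: decompose the sought group into its semisimple and unipotent parts, guess the lattice $\Lambda$ using \cref{prop:bound} and \cref{prop:cyclic}, tie $s$-generation of $G$ to that of $G_s$ and $G_u$ via \cref{Theremark} and \cref{prop:uni}, and reduce to a bounded-alternation first-order sentence over $\QQbar$ decided by \cref{thm:FO}, with the same count of lattice guesses. The only difference is the shape of the sentence: you existentially quantify over nilpotent generators $N_1,\dots,N_s$ and assert the product decomposition $P^{-1}ZP=\{\exp(\sum_{i} t_i N_i)\,\Delta(g) : t_i\in\QQbar,\ g\in H_\Lambda\}$, giving an $\exists^*\forall^*\exists^*$ prefix (exactly the device the paper uses later for \cref{pro:loopGen}), whereas the paper's proof of this proposition instead checks the intrinsic conditions that $P^{-1}ZP$ is a commutative group of upper triangular matrices whose diagonal part equals $\{\Delta(g):g\in H_\Lambda\}$ and whose unipotent part has logarithm a linear variety of dimension at most $s$, yielding a slightly cleaner $\exists^*\forall^*$ sentence; both variants fall within the scope of \cref{thm:FO}.
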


\begin{proof}
 Recall that the input to the problem consists of~$m$ polynomials, each of total degree at most~$b$, together with a natural number $s$.  Let $Z$ be the subvariety  of $\GL_d(\QQbar)$ defined by the input polynomials. 
   The task is to determine whether  $Z$ is a group that is topologically generated by at most $s$  commutative   matrices.

Suppose that the input is a positive instance of the problem, that is, 
$Z$ is an algebraic group that is topologically generated by 
 commutative   matrices~$M_1, \ldots, M_s \in \GL_d(\overline{\Q})$.   
  Recall that  commutative  matrices are simultaneously triangularisable~\cite[Theorem~2.3.3]{horn2012matrix}.
       Thus there exists $P \in \GL_d(\overline{\Q})$
   such that 
there exist diagonal matrices~$D_i$ and 
upper unitriangular matrices~$U_i$, 
$1\leq i\leq s$, where 
$P^{-1}M_iP = D_iU_i$, and moreover 
$D_i$ and $U_i$ commute.
Then we can recover $\overline{\langle D_i: 1\leq i\leq s\rangle}$ as the set of diagonal matrices in $P^{-1}\,\overline{\langle M_i: 1\leq i\leq s\rangle}\, P$. 
As in~\Cref{prop:semisimplematrix}, 
it follows that 
$\overline{\langle D_i: 1\leq i\leq s\rangle}$ is the zero locus of a system of polynomials of degree at most $b$.  

By \cref{prop:bound}, 
$\overline{\langle D_i: 1\leq i\leq s\rangle}=\{\Delta(g):g\in H_\Lambda\}$
for some lattice $\Lambda \subseteq \mathbb{Z}^d$ that is generated by vectors having supremum norm at most $b$ and $H_\Lambda$ is $s$-generated. 
Note that 
$\overline{\langle U_i: 1\leq i\leq s \rangle}$ is %
the set of upper unitriangular matrices in 
$P^{-1}\, \overline{\langle M_i: 1\leq i\leq s\rangle}\, P$.

The decision  procedure is as follows:
\begin{enumerate}
    \item \label{item 0}
    Guess a lattice $\Lambda \subseteq \mathbb{Z}^d$ whose generators have norm at most $b$ and such that $H_\Lambda$ is $s$-generated.
    \item Output "yes" %
    if there exists $P\in \mathrm{GL}_d(\QQbar)$ 
        such that 
        \begin{enumerate}
            \item \label{item 1} $G\coloneqq P^{-1}ZP$ is a commutative group of upper triangular matrices;
            \item \label{item 2} $\{A \in G : A \text{ diagonal}\} = \{\Delta(g): g\in H_\Lambda\}$; and
            \item \label{item 3} $\{ \log(A): A\in G, \; A \text{ unipotent} \}$
            is a linear variety of dimension at most $s$.
        \end{enumerate}
 \end{enumerate}%
 
As in~\Cref{prop:semisimplematrix}, guessing~$\Lambda$ that satisfies the condition of \cref{item 0} 
is via~\cref{prop:cyclic}.
We claim that the procedure outputs "yes" %
if and only if the input is a positive instance of the problem. 
Indeed, \cref{item 1} checks that $G$ is a commutative algebraic matrix group.  In this case, by~\Cref{fact:decomp}, both the set $G_s$ of semisimple matrices in $G$ and the set $G_u$ of unipotent matrices in $G$ form subgroups of $G$.
Next, \cref{item 2,item 3} respectively check that
$G_s$ and $G_u$ are $s$-generated (relying on \cref{prop:cyclic,prop:uni}). 
This in turn implies that $G$ is itself $s$-generated, as noted in~\cref{Theremark}.

The existence of $P$ 
satisfying \cref{item 1,item 2,item 3} reduces to checking the truth in $\QQbar$ of an $\exists^*\forall^*$-sentence in the language of  fields.  The existential quantifiers correspond to the possible choices of $P$, while the universal quantifiers range over entries of the group $G$ defined in \cref{item 1}.  For a fixed choice of $\Lambda$, the truth of such a formula can be decided in time $(mb)^{\mathrm{poly}(d)}$ by \cref{thm:FO}.
Given that the number of possible choices of the lattice $\Lambda$ is at most $(2b)^{d^2}$ the claimed complexity bound immediately follows. 
\end{proof}

\section{Orbit-Closure Determination}
\label{sec:orbitdet}
Recall the aforementioned {orbit-closure determination problem}.  The problem asks,  given   $s \in \N$ and a family
 of $m$ polynomials in $\QQ[\{x_i\}_{1\leq i\leq d}]$, each of total degree at most $b$,  to determine whether 
 their zero locus $Z\subseteq \QQbar^d$ is 
 the orbit closure of some point~$\bm{v}\in \QQbar^d$ under the action of an  $s$-generated algebraic matrix group, i.e.,
 determine whether there exists an algebraic group topologically generated by matrices
    $M_1, \ldots, M_s \in \mathrm{GL}_d(\QQbar)$ and vector 
    $\boldsymbol{v} \in \QQbar^d$ such that
    $Z$ is the 
    Zariski closure of the orbit
    $\{ M_1 ^{\ell_1}\cdots M_s^{\ell_s} \boldsymbol{v} : \ell_1, \ldots, \ell_s \in \mathbb{Z}\}$.

Key to the main result in this section (\cref{pro:loopGen}) is \cref{lemma:MatrixUlike}.
Briefly, the lemma describes a change of basis on a given commutative matrix group \(G\) and vector $\bm{v}$ such that  in this new basis the elements of the semisimple subgroup \(G_s \leq G\) are diagonal,  the elements of the unipotent subgroup \(G_u \leq G\) are upper unitriangular; and moreover \(\bm{v}\) is mapped to a binary vector with a prescribed structure.
In fact, the lemma permits us to focus on 
the orbit closure of the vector~$\bm{1}$ (that is, the all-one vector).

\begin{lemma}\label{lemma:MatrixUlike}
        Let $G = \overline{ \langle M_i: 1\leq i\leq s \rangle} \leq \GL_d(\alg)$ be a 
  commutative algebraic group and $\bm{v} \in \QQbar^d$.  There exist $ P \in \GL_d(\alg)$,  
 and $T \in \{0, 1\}^{d\times k}$ with the following properties: 
        \begin{enumerate}
         \item  the elements of $PGP^{-1}$ consist of block diagonal matrices, where each block is a scalar multiple of a unitriangular matrix;
            \item $PG_sP^{-1}$ consists of diagonal matrices and $PG_uP^{-1}$ consists of upper unitriangular matrices,
            \item  $T^\top T=\Id_k$,
            \item $ P \bm{v} = T \bm{1}$ is a $0$-$1$ vector with at most one $1$ per block.
       \end{enumerate}
    \end{lemma}
    \begin{proof}

The proof of \cref{lemma:MatrixUlike} relies on
\cref{thm:basis} and~\Cref{col:somethingcomuttaive} in  Section~\ref{ssec:basis}, immediately below.
These results show how to find $P$ satisfying Item (1) such that 
\(P\boldsymbol{v} = \bm{e}_{i_1} + \cdots + \bm{e}_{i_k}\) is a sum of standard unit vectors of $\QQbar^d$.     
     Defining $T$ to be the $d\times k$ matrix with columns $\bm{e}_{i_1}, \ldots, \bm{e}_{i_k}$,
    it follows that \(T^\top T = \mathrm{Id}_k\) and \(P\boldsymbol{v}= T\boldsymbol{1}\).

    \end{proof}

\subsection{A Convenient  Base Change for Commuting Matrices}
\label{ssec:basis}
This section contains technical details concerning the proof of~\cref{lemma:MatrixUlike}.

  \begin{theorem}

Let $A_1,\ldots,A_m$ be pairwise commuting matrices   in $\QQbar^{d \times d}$, and 
$\boldsymbol v \in \QQbar^d$ a non-zero vector.  
 There exists $P \in \mathrm{GL}_d(\QQbar)$ such that 
 \begin{itemize}
     \item $PA_iP^{-1}$ is block diagonal;
     \item each block is  a
scalar multiple of an upper unitriangular matrix; 
\item $P\boldsymbol v$
is  a 0-1 vector  with at most one 1
per block.
 \end{itemize}
\label{thm:basis}
    \end{theorem}
    \begin{proof}
By~\cite[Theorem 1]{Newman1967TwoCT},  there exists $P \in \mathrm{GL}_d(\QQbar)$ such that $PA_iP^{-1}$ is upper triangular for $i\in \{1,\ldots,m\}$. 
We have the following easy refinement of this well-known fact.

\begin{claim}
\label{claim:1}
There exists $P \in \mathrm{GL}_d(\QQbar)$ such that $PA_iP^{-1}$ is upper triangular for $i\in
\{1,\ldots,m\}$ and $P \boldsymbol v = \boldsymbol e_j$ for some $j \in \{1,\ldots,d\}$.
\end{claim}
\begin{proof}[Proof of \cref{claim:1}]
As mentioned above,  by~\cite[Theorem 1]{Newman1967TwoCT},  there exists $P \in \mathrm{GL}_d(\QQbar)$ such that $PA_iP^{-1}$ is upper triangular for $i\in \{1,\ldots,m\}$.
This  yields a basis
$\boldsymbol v_1,\ldots,\boldsymbol v_d$ of
$\QQbar^d$ such that 
$A_i \boldsymbol v_j  \in \mathrm{span}(\boldsymbol v_1,\ldots,\boldsymbol v_j)$ for all $i$ and $j$. 
Let $\ell$ be the minimum index such that
$\boldsymbol v \in \mathrm{span}(\boldsymbol v_1,\ldots,\boldsymbol
v_\ell)$.  Then
$\boldsymbol v = \sum_{i=0}^\ell \alpha_i \boldsymbol v_i$, where
$\alpha_\ell \neq 0$ by minimality of $\ell$.  Hence
$\boldsymbol v_\ell \in \mathrm{span}(\boldsymbol
v_1,\ldots,\boldsymbol v_{\ell-1},\boldsymbol v)$.
Consider a new basis in which $\boldsymbol v_\ell$ is replaced by 
$\boldsymbol v$. 
Note that $\mathrm{span}(\boldsymbol v_1,\ldots,\boldsymbol v_\ell) =
\mathrm{span}(\boldsymbol v_1,\ldots,\boldsymbol v_{\ell-1},\boldsymbol 
v)$.

Taking $Q$ to be the matrix with columns
$\boldsymbol v_1,\ldots,\boldsymbol v_{\ell-1},\boldsymbol
v,\boldsymbol v_{\ell+1},\ldots,\boldsymbol v_n$ we have that
$Q\boldsymbol e_\ell = \boldsymbol v$. Let $P \coloneqq Q^{-1}$ then we have  that $PA_iP^{-1}$ is upper
triangular for $i\in \{1,\ldots,m\}$. This concludes the proof of the claim. 
  \end{proof}

      By \cite[Theorem 2.4]{OuakninePPW19}
we can write $\QQbar^d = V_1 \oplus \cdots \oplus V_k$ such that
each subspace $V_i$ is invariant under $A_1,\ldots,A_m$ and
for all $1\leq i\leq m$ and $1\leq j \leq k$ the restriction of $A_i$
to $V_j$ has a single eigenvalue.
Let $\boldsymbol v= \boldsymbol v_1 +\cdots + \boldsymbol v_k$ be the
corresponding decomposition of $\boldsymbol v$.

Fix a subspace $V_j$.  We apply Claim~\ref{claim:1}
to the restrictions of $A_1,\ldots,A_m$ to $V_j$ and the vector
$\boldsymbol v_j$.  We obtain a basis for $V_j$ that includes $ \boldsymbol v_j$
and such that the restriction of $A_i$ to $V_j$ is upper triangular.
      \end{proof}

The next  corollary follows from~\Cref{thm:basis}.

 \begin{corollary}
 \label{col:somethingcomuttaive}
        Let $A_1,\ldots,A_m$ be matrices in $\mathrm{GL}_d(\QQbar)$ and let
        $P \in \mathrm{GL}_d(\QQbar)$
        be such that $PA_iP^{-1}$ is block diagonal where each block is the
scalar multiple of an upper unitriangular matrix. 
Given $M \in \langle A_1,\ldots,A_m\rangle$, $M$ is semi-simple if and
only if $PMP^{-1}$ is diagonal and $M$ is unipotent if and only if
$PMP^{-1}$ is upper unitriangular.
        \end{corollary}
        \begin{proof}
          Clearly $PMP^{-1}$ is block diagonal.  Thus 
 $M$ is semisimple  if and only if it is
blockwise semisimple and it is unipotent if and only if it is
blockwise unipotent.  Moreover, since each block is upper triangular
with constant diagonal, such a block is diagonalisable if and only if it is
already diagonal and is unipotent if and only if it is has all ones along the diagonal.
          \end{proof}

\subsection{Decision Procedures for the Orbit-Closure Determination}
\label{ssec:orbitclosure}
The main contributions of this section are the procedures for certain cases of the orbit-closure determination problem (\cref{prop:loopGensemi,pro:loopGen}).
The procedure in \cref{prop:loopGensemi} supposes %
that the generators of the matrix group are semisimple and  commutative.
The procedure in \Cref{pro:loopGen} lifts the requirement that the generators are semisimple.
We illustrate the procedures with a worked example (\cref{ex:loopGen}) at the close of this section.

\begin{proposition}
\label{prop:loopGensemi}
The orbit-closure determination problem for commutative semisimple matrices with  parameters \((s,d,m,b)\) can be decided in time 
   $(mb)^{\mathrm{poly}(d)}$, and in space bounded by~$(d \log{b})^{O(1)}$. 
\end{proposition}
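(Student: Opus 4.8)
The plan is to follow the template established in \cref{prop:semisimplematrix} and \cref{prop:invertiblematrix}, but now working with orbit closures rather than groups proper, using \cref{lemma:MatrixUlike} to put the initial vector into a normal form. Suppose $Z = \overline{\langle M_1, \ldots, M_s \rangle \cdot \bm{v}}$ for semisimple pairwise-commuting $M_i$ and some $\bm{v} \in \QQbar^d$. By \cref{lemma:MatrixUlike} there exist $P \in \GL_d(\QQbar)$, diagonal matrices $D_i = P M_i P^{-1}$, and a zero-one matrix $T \in \{0,1\}^{d \times k}$ with $T^\top T = \Id_k$ such that $P\bm{v} = T\bm{1}$. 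Consequently $P \cdot Z = \overline{\langle D_i : 1 \le i \le s\rangle \cdot T\bm{1}}$. Because $T\bm{1}$ is a zero-one vector, the support coordinates of the orbit are governed by a subgroup of $\mathbb{G}_m^k$ (the action on coordinates outside the support of $P\bm{v}$ is trivially zero), so $P \cdot Z$ is, after deleting the zero coordinates, an algebraic subgroup of $\mathbb{G}_m^k$; it is therefore of the form $\{\Delta(g) : g \in H_\Lambda\}$ (re-embedded into the support coordinates via $T$) for a lattice $\Lambda \subseteq \mathbb{Z}^k$, and by \cref{prop:bound} the degree bound $b$ on the defining polynomials of $Z$ transfers to a generating set of $\Lambda$ consisting of vectors of supremum norm at most $b$. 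Moreover, since $\overline{\langle D_i : 1 \le i \le s\rangle \cdot T\bm{1}}$ carries the structure of the $s$-generated group $H_\Lambda$, \cref{prop:cyclic} gives that the torsion subgroup of $\mathbb{Z}^k/\Lambda$ is $s$-generated, i.e.\ $\Lambda$ has at most $s$ elementary divisors different from one.

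The decision procedure is then: enumerate the zero-one matrix $T \in \{0,1\}^{d\times k}$ (equivalently, the zero-one vector $P\bm{v}$, so $k \le d$ choices for the size and at most $2^d$ choices overall), and guess a lattice $\Lambda \subseteq \mathbb{Z}^k$ whose generators have supremum norm at most $b$ and which has at most $s$ elementary divisors not equal to one. For each such pair $(T, \Lambda)$, form the candidate variety $W_{T,\Lambda} := \{T h : h \in H_\Lambda\} \subseteq \QQbar^d$ (with the convention that coordinates of $Th$ outside the image of $T$ are zero), and check whether there exists $P \in \GL_d(\QQbar)$ with $P \cdot Z = W_{T,\Lambda}$ — that is, whether $\{P\bm{w} : \bm{w} \in Z\} = W_{T,\Lambda}$. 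This is the truth in $\QQbar$ of an $\exists^*\forall^*$-sentence: the existential block encodes $P$, and a universal block ranges over a point $\bm{w}$ expressing the set equality $P\bm{w} \in W_{T,\Lambda} \Leftrightarrow \bm{w} \in Z$ (with membership in $W_{T,\Lambda}$ itself expressed by the binomial equations cutting out $H_\Lambda$ in the support coordinates together with the vanishing of the remaining coordinates — all of degree at most $b$). By \cref{thm:FO} each such sentence is decidable in time $(mb)^{\mathrm{poly}(d)}$, and since the number of choices of $(T,\Lambda)$ is at most $2^d \cdot (2b)^{d^2}$, the overall running time is $(mb)^{\mathrm{poly}(d)}$; the space bound $(d\log b)^{O(1)}$ follows from the remark after \cref{thm:FO} on deciding fixed-alternation sentences in polynomial space, together with the fact that the choices $(T,\Lambda)$ can be iterated in polynomial space.

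For correctness I must argue both directions. If the procedure answers yes, then the equations defining $W_{T,\Lambda}$ and the matrix $P$ witness that $Z = P^{-1} W_{T,\Lambda}$; picking $s$ diagonal matrices $D_i$ topologically generating $H_\Lambda$ (these exist by \cref{prop:cyclic} applied to the elementary-divisor condition on $\Lambda$) and setting $M_i := P^{-1}D_iP$, $\bm{v} := P^{-1}T\bm{1}$ exhibits $Z$ as $\overline{\langle M_i \rangle \cdot \bm{v}}$; the $M_i$ are semisimple and commute because the $D_i$ do. Conversely, if $Z$ is such an orbit closure, the normal form from \cref{lemma:MatrixUlike} together with \cref{prop:bound,prop:cyclic} produces exactly a pair $(T,\Lambda)$ in the enumerated range for which the sentence holds. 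The main obstacle I anticipate is the bookkeeping around the support of $P\bm{v}$: one must check carefully that $\overline{\langle D_i \rangle \cdot T\bm{1}}$ really is (the image under $T$ of) an algebraic subgroup of $\mathbb{G}_m^k$ — this uses that a diagonal matrix acting on a zero-one vector fixes the zero coordinates and acts coordinatewise by multiplication on the ones, so the orbit closure in the support coordinates is a subgroup of $\mathbb{G}_m^k$ containing $\bm{1}$ — and that the degree bound $b$ genuinely survives this restriction-and-reembedding so that \cref{prop:bound} applies to give norm-$b$ lattice generators. Everything else is a direct adaptation of the group-determination arguments, with $\forall^*$ in place of the earlier universal block now quantifying over a point of the ambient space rather than a matrix.
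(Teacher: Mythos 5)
Your proposal is correct and follows essentially the same route as the paper's proof: normalise via \cref{lemma:MatrixUlike} so that $P\bm{v}=T\bm{1}$, identify $PZ$ with the image under $T$ of an algebraic subgroup $H_\Lambda$ of $\mathbb{G}_m^k$ (the paper makes your "support coordinates" argument precise by introducing the restricted diagonal matrices $\tilde D_i$ with $D_iT=T\tilde D_i$ and invoking that injective linear images of Zariski-closed sets are closed), apply \cref{prop:bound} and \cref{prop:cyclic} to bound the lattice, and reduce the existence of $P$ to an $\exists^*\forall^*$-sentence checked via \cref{thm:FO}. The converse construction $M_i:=P^{-1}D_iP$, $\bm{v}:=P^{-1}T\bm{1}$ is also exactly the paper's.
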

\begin{proof}

 Recall that the input to the problem consists of~$m$ polynomials, each of total degree at most~$b$, together with a natural number $s$.  Let $Z$ be the subvariety  of $\QQbar^d$ defined by the input polynomials. 
   The task is to determine whether  $Z$ is an orbit-closure under a  group that is generated by at most $s$  commutative  semisimple matrices.

Suppose that the input is a positive instance of the problem, that is, 
 $Z=\overline{\langle M_i: 1\leq i\leq s\rangle  \cdot \boldsymbol{v}}$ for  
 semisimple commutative matrices $M_1, \ldots, M_s \in\mathrm{GL}_d(\QQbar)$ and a vector
$\boldsymbol{v}\in \QQbar^d$.  
By \cref{lemma:MatrixUlike},  there exist matrices  
$P\in \mathrm{GL}_d(\QQbar)$ and 
$T\in \{0,1\}^{d\times k}$
such that $D_i\coloneqq PM_iP^{-1}$ are  diagonal, and moreover 
$T^\top T = \mathrm{Id}_k$
and $P\boldsymbol{v}=T\boldsymbol{1}$ hold.

For all $i\in \{1,\ldots,s\}$, denote by $D'_i \in \mathrm{GL}_k(\QQbar)$  the diagonal matrix uniquely defined by the requirement that $D_iT=TD'_i$. 
Write  
\(G\coloneqq\left\{g\in\mathbb{G}_m^k:\Delta(g) \in \overline{\langle D'_i: 1\leq i \leq s \rangle} \right\}\).
Then we have
 \begin{equation*}
 PZ = \overline{\langle D_i: 1\leq i \leq s  \rangle \cdot T \boldsymbol{1}} 
  = \overline{T \langle D'_i: 1\leq i \leq s \rangle \cdot \boldsymbol{1}} 
 = \overline{\{ Tg : g \in G\}}.
 \end{equation*}
Note moreover that by the definition of $G$ we have  \[G=\left\{ g\in \mathbb{G}_m^k : T g   \in PZ\right\}\] and so $G$ is defined by polynomials of total degree at most $b$. 
It follows from \cref{prop:bound} that $G=H_\Lambda$ for some lattice $\Lambda\subseteq \mathbb{Z}^k$ that is generated by vectors whose entries have absolute value at most $b$.  

Conversely, suppose that 
 $PZ= \overline{\left\{ Tg  : g \in H_\Lambda \right\}}$
for some matrices $P \in \mathrm{GL}_d(\QQbar)$ and $T \in \{0,1\}^{d\times k}$ such that $T^\top T=\mathrm{Id}_k$, and
lattice $\Lambda$ as above.  Then \[Z=\overline{\langle M_i : 1\leq i\leq s \rangle \cdot \boldsymbol{v}},\] 
where $\boldsymbol{v}=P^{-1}T\boldsymbol{1}$  and 
$M_i\coloneqq P^{-1}D_iP$
with
$D_i\in\mathrm{GL}_d(\QQbar)$ being any diagonal matrices such that  $D_iT=T\Delta(g_i)$ for some topological generators $\{g_1, \ldots, g_s\}$ of $H_\Lambda$.

In summary, the decision procedure is as follows:
\begin{enumerate}
\item Guess $k \subseteq \{0,\ldots,d\}$ and $T \in \{0,1\}^{d\times k}$ such that $T^\top T=\mathrm{Id}_k$. %
\item Guess a lattice $\Lambda \subseteq \mathbb{Z}^{k}$ whose generators have norm at most $b$ and such that $H_\Lambda$ is $s$-generated (see~\Cref{prop:cyclic}).
\item Determine whether there exists $P\in \mathrm{GL}_d(\QQbar)$ such that 
$PZ = \overline{\left\{Tg : g \in H_\Lambda\right\}}$. 
\end{enumerate}

Step~3 can be reduced in polynomial time to checking the truth of a $\exists^*\forall^*\exists^*$-sentence in the theory of 
real closed fields.  The outermost existential quantifiers correspond to the choice of the matrix $P$.  Then 
the right-to-left inclusion in the equation
$PZ = \overline{\left\{Tg : g \in H_\Lambda\right\}}$
is expressed by the formula 
\[\forall g \in H_\Lambda \, \exists \boldsymbol z \in Z\,  (P\boldsymbol z = Tg),\]
while, by \cref{fact:EucZar},
the left-to-right inclusion 
is expressed by the formula
\[ \forall \boldsymbol z \in Z \, \forall \varepsilon >0\, \exists g \in H_\Lambda \,\left(\|P\boldsymbol z - Tg \|<\varepsilon\right).
\]
By \cref{thm:FO}, the truth of such a sentence can be decided in time $(mb)^{\mathrm{poly}(d)}$.
The claimed overall complexity bound now follows from the fact that there are at most $(2b)^{d^2+1}$ choices of the lattice 
$\Lambda$ and matrix $U$.
\end{proof}

We motivate the constructive subroutines in \cref{prop:loopGensemi} with loop synthesis.

\begin{example} \label{ex:loopGensemi}
 Let us determine whether there is a single-path linear loop with update matrix \(M\in\GL_2(\QQbar)\) and initial vector \(\bm{v}\in\QQbar^2\) such that the zero set  $Z\subseteq\QQbar^2$ of the ideal $I \coloneqq \langle 4 x^2 + y^2 + 4x y - x - y \rangle$ satisfies \(Z= \overline{\langle M \rangle \cdot \bm{v}}\).
In other words, we seek a matrix \(M\) and vector \(\bm{v}\) in the loop
\begin{algorithmic}
 \State \(\bm{x} \leftarrow \bm{v}\);
\While{$(*)$} %
 \State \(\bm{x} \leftarrow M \bm{x}\);
\EndWhile
\end{algorithmic}
such that following two Hoare triples are satisfied:
\begin{align*}
   & \{ \mathbf{true}\} \; \boldsymbol x \leftarrow \boldsymbol v \;  \; \{ 4 x^2 + y^2 + 4x y - x - y=0\} \\
   & \{ 4 x^2 + y^2 + 4x y - x - y=0 \} \; \bm{x} \leftarrow M \bm{x} \; \{ 4 x^2 + y^2 + 4x y - x - y=0\} 
\end{align*}
and such that $4 x^2 + y^2 + 4x y - x - y=0$ is moreover the strongest polynomial invariant satisfied by all reachable program state (i.e., all other invariants 
lie in the ideal $I$).

Suppose that in Steps 1 and 2 of the procedure in \cref{prop:loopGensemi} we guess $H_\Lambda \coloneqq \{ (x,y) \in \mathbb{G}_m^2 : x^2 - y =0 \}$ and the matrix $T \coloneqq \Id_2$.  For Step 3, we want to find all invertible matrices $P = \left( \begin{smallmatrix}
        a & b \\
        c&d 
    \end{smallmatrix} \right) \in \GL_2(\QQbar)$ such that $V(P^{-1} \cdot I) = \overline{\{Th : h \in H_\Lambda\}}$. This is equivalent to the requirement that the two polynomials
    \begin{equation*}
    4(ax + by)^2 + (cx + dy)^2 + 4(ax+ by)(cx + dy) - ax - by - cx - dy
    \end{equation*}
    and
   \(
    x^2 -y
   \)
    are multiples of one another. 
    Therefore the polynomials %
    defining $P^{-1}$ comprise the following ideal:
    \begin{multline*}
        J_{P^{-1}} \coloneqq \langle  4a^2 + c^2 + 4ac - b - d, 4b^2 + d^2 + 4bd,
         8 ab + 2cd+ 4 ad + 4bc , a+ c \rangle 
         \\
       = \langle  a+c, c^2 - b -d, 2 b c + c d , (2b+d)^2 \rangle.
    \end{multline*}
     One choice of $P^{-1}$ is $\left( \begin{smallmatrix}
        \phantom{-}1 & -1 \\
        -1 & \phantom{-}2
    \end{smallmatrix} \right) $.  
   Thus we realise \(Z\) as the orbit closure \(\overline{\langle M \rangle \cdot \bm{v}}\)
    where
    \begin{equation*} 
    M \coloneqq P^{-1} \begin{pmatrix}
        2 &0\\
        0&4
    \end{pmatrix} P = \begin{pmatrix}
        0&-2\\
        4& \phantom{-}6
    \end{pmatrix} \quad \text{and} \quad \boldsymbol{v} \coloneqq P^{-1}\begin{pmatrix}
        1\\
        1
    \end{pmatrix} = \begin{pmatrix}
        0\\
        1
    \end{pmatrix}.
\end{equation*}
\hfill $\blacktriangleleft$
\end{example}
\color{black}

The following theorem is our main contribution, which provides a decision procedure for the orbit-closure determination problem for  commutative groups. The generalization  of this result to 
 the case of general matrix groups appears to be  challenging.

\bigtheorem*

\begin{proof}
 Recall that the input to the problem consists of~$m$ polynomials, each of total degree at most~$b$, together with a natural number $s$.  Let $Z$ be the subvariety  of $\QQbar^d$ defined by the input polynomials.
 The task is to determine whether $Z$ is an orbit closure under an algebraic  group that is topologically generated by at most $s$  commutative   matrices.

Suppose that the input is a positive instance of the problem, that is, 
 $Z=\overline{\langle M_i: 1\leq i\leq s\rangle  \cdot \boldsymbol{v}}$ for  
  commutative matrices $M_1, \ldots, M_s \in\mathrm{GL}_d(\QQbar)$ and a vector
$\boldsymbol{v}\in \QQbar^d$. 
Using \cref{fact:decomp}, let
 $G_u $ be the subgroup of unipotent elements of $G$ and $G_s$ be the subgroup of semisimple elements of  $G$. 
  We have \(Z = \overline{G_u \cdot  G_s \cdot \bm{v}}\).

Applying  \Cref{lemma:MatrixUlike} to the group $G$, we obtain
$P\in \mathrm{GL}_d(\QQbar)$
and $T \in \{0,1\}^{d\times k}$ 
such that $PG_sP^{-1}$ is a group of diagonal matrices, 
$PG_uP^{-1}$ is a group of upper unitriangular matrices,
$T^\top T = \mathrm{Id}_k$,
and $P\boldsymbol{v}=T\boldsymbol{1}$.
In particular, 
we have $PG_sP^{-1} = \overline{\langle D_i : 1\leq i \leq s\rangle}$
for diagonal matrices $D_1,\ldots,D_s$ and 
 $PG_uP^{-1} = \overline{\langle U_i : 1\leq i \leq s\rangle}$ for upper unitriangular matrices 
 $U_1,\ldots,U_s$, such that $PM_iP^{-1}=D_iU_i$ for $1\leq i \leq s$.

For all $i\in \{1,\ldots,s\}$, denote by $D'_i \in \mathrm{GL}_k(\QQbar)$   the diagonal matrix uniquely defined by the requirement that $D_iT=TD'_i$.
Furthermore, write
\begin{equation*}
    G' \coloneqq \{g\in\mathbb{G}_m^k:\Delta(g) \in \overline{\langle D'_i: 1\leq i\leq s\rangle} \}.
\end{equation*}
Then  we have
\begin{align*}
PZ &= \overline{P \cdot G_u \cdot G_s \cdot \bm{v}}\\
&= \overline{ \langle U_i: 1 \leq i \leq s\rangle  \cdot   \, \langle D_i: 1 \leq i \leq s \rangle  \cdot \, P\boldsymbol{v}}\\
&= \overline{ \langle U_i: 1 \leq i \leq s\rangle  \cdot   \, \langle D_i: 1 \leq i \leq s \rangle  \cdot \, T\boldsymbol{1}}\\
&= \overline{\langle U_i: 1 \leq i \leq s \rangle  \cdot  \, T \langle D'_i: 1 \leq i \leq s  \rangle \cdot \boldsymbol{1}}\\
&= \overline{\langle U_i: 1 \leq i \leq s \rangle  \cdot  \, TG'}
\end{align*}
where the last equality follows from \Cref{prop:uni}.

To obtain a degree bound on~$G'$, 
we prove that  $G' = \{ g \in\mathbb{G}_m^k  : P^{-1} T g \in Z\}$.

\begin{claim}
\label{claim:intersection}
  $G' = \{ g \in\mathbb{G}_m^k  : P^{-1} T g \in Z\}$.
\end{claim}
\begin{proof}[Proof of \Cref{claim:intersection}] 
  The left-to-right inclusion is obvious. 
    Now let $g \in \mathbb{G}_m^k$  such that $P^{-1}Tg \in Z$.
    Since the Zariski closure of
    \begin{equation} \overline{\langle U_i: 1 \leq i \leq s \rangle  \cdot  \, TG'}
    \label{eq:SET}
    \end{equation}
    coincides with its Euclidean closure {(by \cref{fact:EucZar})}, there exists a sequence of vectors $(h_n)_{n\ge 0} $ in~\eqref{eq:SET}
        such that $\lim_{n \to \infty} h_n = Tg.$
By construction, the elements of~\eqref{eq:SET}
are given by blocks, therefore we can argue blockwise. Note that a block of $Tg$ has only one non-zero entry, let us consider one of them: 
\[
(0, \ldots, 0,g_i, 0 ,\ldots,0)^\top 
\]
where  $g_i$ corresponds to the $i$th entry in the block. 
The corresponding block of $h_n$ has the form
\[
(h_n^{(1)}, \ldots ,h_n^{(i)},0, \ldots, 0)^\top,
\]
and moreover the corresponding block of $P\bm{v}$ has $1$ in the $i$th entry and $0$ elsewhere. Therefore, we have
\begin{equation}\label{eq:limit}
\lim_{n \to \infty} \begin{pmatrix}
    h_n^{(i)} & * & *  & * & * \\
    & \ddots &  \ddots& * & *\\ 
     && \ddots  & * & * \\
     & & & &  h_n^{(i)}
\end{pmatrix} \begin{pmatrix}
     0 \\
     \vdots \\
     1 \\
     0 \\
     \vdots
     
\end{pmatrix} =  \begin{pmatrix}
     0 \\
     \vdots \\
     g_i \\
     0 \\
     \vdots
     
\end{pmatrix}
\end{equation}
where the matrix  
\[\begin{pmatrix}
    h_n^{(i)} & * & *  & * & * \\
    & \ddots &  \ddots& * & *\\ 
     && \ddots  & * & * \\
     & & & &  h_n^{(i)}
\end{pmatrix}\]
is a block of an element in $P G P^{-1}$ for every $n \in \N$, and its corresponding semisimple element has as  block  $h_n^{(i)} \Id $.  
From \eqref{eq:limit} we conclude that
\[
\lim_{n \to \infty } ( h_n^{(i)}\Id)\cdot  \begin{pmatrix}
     0 \\
     \vdots \\
     1 \\
     0 \\
     \vdots
     
\end{pmatrix}  =  \begin{pmatrix} 
     0 \\
     \vdots \\
     g_i \\
     0 \\
     \vdots
     
\end{pmatrix} 
\]
and moreover that the diagonal blocks $( h_n^{(i)}\Id)$ correspond to the blocks of diagonal matrices  that belong to $PGP^{-1}$. Thus, there exists a sequence of diagonal matrices $(D^{(n)})_{n \ge 0} \subseteq PGP^{-1}$ such that 
\[
\lim_{ n \to \infty} D^{(n)} T \bm{1} = Tg.
\]
Let  $\widetilde{D}^{(n)}$ be the diagonal matrix uniquely defined by  $D^{(n)}T=T \widetilde{D}^{(n)}$.  
Then we have $( \widetilde{D}^{(n)})_{n\ge 0} \subseteq \overline{\langle D'_i: 1\leq i\leq s\rangle}$. Hence we have, 
\[
\lim_{ n \to \infty } T\widetilde{D}^{(n)}\bm{1} = T\Delta(g)\bm{1},
\]
and since multiplication by $T$ defines an injective map we have 
\[
\lim_{ n \to \infty } \widetilde{D}^{(n)} = \Delta(g),
\]
which proves that $ g \in G'$.
This ends the proof of \cref{claim:intersection}.
\end{proof}

It follows that $G'$ is defined by polynomials of total degree at most $b$ and hence has the form $H_\Lambda$ for some lattice $\Lambda\subseteq \mathbb{Z}^k$ whose generators have norm at most $b$ and such that $H_\Lambda$ is $s$-generated (see~\Cref{prop:cyclic}).

Conversely, suppose that there exist  $k \in \{0,\ldots,d\}$ and $T \in \{0,1\}^{d\times k}$ such that $T^\top T=\mathrm{Id}_k$,
together with $P \in \GL_d(\QQbar)$,
commuting matrices $U_1,\ldots,U_s,D_1,\ldots,D_s\in \GL_d(\QQbar)$ with $U_i$ upper unitriangular and $D_i$ diagonal,
and an $s$-generated lattice $\Lambda\subseteq \mathbb Z^k$ such that \[T H_\Lambda = \overline{\langle D_1,\ldots,D_s \rangle \cdot T \boldsymbol 1}\]  satisfying
\begin{equation}
\label{eq:ZAfC}
     PZ = \overline{\langle U_i :1\leq i\leq s \rangle \cdot TH_\Lambda}.
\end{equation}
Then
\(
        Z=\overline{\langle M_i : 1 \leq i \leq s \rangle \cdot \boldsymbol{v}}\),
where $M_i\coloneqq P^{-1}U_i D_i P$ and $P\boldsymbol v = T\boldsymbol 1$.

In summary, the decision  procedure is as follows:
\begin{enumerate}
\item Guess $k \in \{0,\ldots,d\}$ and $T \in \{0,1\}^{d\times k}$ such that $T^\top T=\mathrm{Id}_k$.
\item Guess a lattice $\Lambda \subseteq \mathbb{Z}^{k}$ whose generators have norm at most $b$ and such that $H_\Lambda$ is $s$-generated, say by $g_1,\ldots,g_s \in H_\Lambda$ (see~\Cref{prop:cyclic}).
\item Return "yes" if there 
 exist $P\in \mathrm{GL}_d(\QQbar)$ and commuting matrices $U_1, \ldots, U_s, D_1,\ldots,D_s \in \mathrm{GL}_d(\QQbar)$ such that
 the $U_i$ are upper unitriangular, the $D_i$ are diagonal, $D_iT=T\Delta(g_i)$ holds for all $i\in\{1,\ldots,s\}$, and \eqref{eq:ZAfC} holds. 
\end{enumerate}

Step 3 can be reduced to checking the truth of an $\exists^*\forall^*\exists$-sentence 
over the theory of real closed fields.  The outer group of existential quantifiers range over the possible choices of the matrices
$P$ and $U_1,\ldots,U_s$.    The rest of the formula checks~\eqref{eq:ZAfC}.
The right-to-left inclusion is expressed by the formula
\[ \forall t_1 \cdots \forall t_s \, \forall g \in H_\Lambda \cdot  \exp(\textstyle{\sum_{i=1}^s} t_i\log U_i)Tg \in PZ.\]
By~\Cref{fact:EucZar}, the left-to-right inclusion is expressed by the formula
\[ \forall \boldsymbol z \in Z \, \forall \varepsilon>0 \exists t_1 \cdots \exists t_s \, \exists g \in H_\Lambda \cdot \|P\boldsymbol z-\exp(\textstyle{\sum_{i=1}^s} t_i\log U_i)Tg \|<\varepsilon. \]
By \cref{thm:FO}, the truth of such a sentence can be decided in time $(mb)^{\mathrm{poly}(d)}$.
Then the overall complexity bound follows from the fact that the number of choices of the lattice $\Lambda$ and matrix $T$ is at most $(2b)^{d^2+1}$.
\end{proof}

\cref{ex:loopGen}, below, applies the procedure in \cref{pro:loopGen} to the variety we first saw in \cref{ex:loopGenintro,ex:loopGenintro2}  in the Introduction.
The calculations involved in the preparation of \Cref{ex:loopGen,example-semi} were performed in \textsc{Macaulay2}~\cite{M2}.

\begin{example} \label{ex:loopGen}
    Let $Z\subseteq \QQbar^4$ be the zero set of the ideal \(I= \langle Q_1, Q_2 \rangle\), defined in \Cref{ex:loopGenintro2},  where
    \[  Q_1 = x_2^2 - x_1 - x_4 \quad \text{and} \quad Q_2 = -2 x_4 x_2 - 2x_3^2 - \frac{1}{5} x_2 x_3.\]
    
 Let us determine whether the zero set \(Z\) is equal to the orbit closure of a 1-generated algebraic matrix group. 
To answer this  affirmatively, we  construct a matrix \(M\in\GL_d(\QQbar)\) and vector \(\bm{v}\in\QQbar^d\) such 
    that \(Z\) is the orbit closure \(\overline{\langle M \rangle \cdot \bm{v}}\).

    Suppose that in Steps 1 and 2 of the procedure in \cref{pro:loopGen} we nondeterministically guess 
    \begin{equation*}
        H_\Lambda \coloneqq \{ (x,y) \in \mathbb{G}_m^2 : x^2 - y=~0\} \; \text{and} \;
        T \coloneqq \begin{pmatrix}
        0& 0\\
        0&0\\
        1&0\\
        0&1
    \end{pmatrix}. 
    \end{equation*}
      Let $g = ( 5, 25)$ be a topological generator of $H_\Lambda$, and let $$D = \begin{pmatrix}
          5& 0& 0& 0\\
          0&5&0&0\\
          0&0&5&0\\
          0&0&0&25
      \end{pmatrix},$$ which satisfies $DT= T \Delta(g)$.  
      
    For Step 3, we would like to find a set of invertible matrices \(P\subseteq \GL_4(\QQbar)\) (with $P^{-1} = (p_{ij})_{\{1 \leq i, j\leq 4\}}$) and a matrix 
   \begin{equation*}
       U \coloneqq \begin{pmatrix}
        1& \lambda&0&0\\
        0&1& \lambda &0\\
        0&0&1& 0\\
        0&0&0&1
    \end{pmatrix},
   \end{equation*}
   for some $\lambda \in \QQbar$, such that
    \[V(P^{-1}\cdot I)= \overline{\{\exp(t\log U)Th : h \in H_{\Lambda}, t\in \QQbar\}}.\] 
    Note that 
     \begin{equation*}
    \exp(t \log U) = \begin{pmatrix}
        1 & t \lambda & \frac{t( t-1)}{2} \lambda^2 &0\\
        0 & 1 & t \lambda & 0\\
        0& 0 & 1&0\\
        0&0&0&1
    \end{pmatrix}; 
    \quad \text{thus} \quad  \exp(t \log U) T \begin{pmatrix}
        x \\
        x^2
    \end{pmatrix} =\begin{pmatrix}
       \frac{t( t-1)}{2} \lambda^2 x \\
       t \lambda x\\
       x \\
       x^2
    \end{pmatrix}.
     \end{equation*}
    
    The ideal defining 
    \(\overline{\{\exp(t\log U)Th : h \in H_{\Lambda}, t\in \QQbar\}}\) is $H \coloneqq\langle x_3^2 - x_4, x_1 x_3 - \frac{1}{2} x_2^2 + \frac{\lambda}{2} x_2 x_3 \rangle$.
Consider the ideal 
    \begin{equation*}
 I_{P^{-1}} = \langle F_1(P^{-1} X), F_2(P^{-1}X) \rangle \subseteq \QQ[P^{-1} = ({p}_{ij}), \lambda, y]/ \langle (\det P^{-1})y -1 \rangle [x_1 , x_2, x_3, x_4].
 \end{equation*}
By applying  Algorithm 
\textsc{ContainmentIso}\footnote{The algorithm  \textsc{ContainmentIso} 
inputs two ideals $I_1$ and $I_2$ and outputs the locus of points $P^{-1}$ for which $ P^{-1} \cdot I_1 \subseteq  P^{-1}\cdot I_2$. 
This algorithm thereby solves a
generalisation of the ideal membership algorithm since it determines the
containment of an ideal into another after a change of variables. Clearly
\textsc{ContainmentIso} can also be applied to determine equality after a change of variables, since $P^{-1} \cdot I_1 =  P^{-1}\cdot I_2$ if and only if $P^{-1} \cdot I_1 \subseteq  P^{-1}\cdot I_2$ and $P^{-1} \cdot I_2 \subseteq  P^{-1}\cdot I_1$. See~\cite[Algorithm 2.9]{AlgsMateusz} for more details.}
and eliminating $\lambda$ we obtain the following ideal, defining the set of admissible choices of $P^{-1}$:
\begin{align*}
    J_{P^{-1}}\coloneqq \langle & p_{34},p_{31},p_{24},p_{22},p_{21},
    p_{13}+p_{43}, p_{12}+p_{42},\\ &
    p_{11}+p_{41},    p_{33}p_{44},p_{32}p_{44}, p_{23}p_{44}, p_{14}p_{44}+p_{44}^2, \\ &
    p_{23}p_{33}+10p_{33}^2+10p_{23}p_{43}, 
2p_{32}^2+p_{23}p_{41},p_{23}^2-p_{14}-p_{44}\rangle.
\end{align*}

One may choose, for example, 
    \begin{equation*}
        P^{-1} = \begin{pmatrix}
    \phantom{-}\frac{1}{2}& 0& 0& 1 \\
    \phantom{-}0& 0& 1& 0\\
    \phantom{-}0& \frac{1}{2}& 0& 0\\
    -\frac{1}{2}& 0& 0& 0
\end{pmatrix}   \quad \text{and} \quad \lambda = -\frac{1}{5}.
    \end{equation*}
Thus we realise  \(Z\) as the orbit closure \(\overline{\langle M \rangle \cdot \bm{v}}\), with \(M=P^{-1} U D P\) and \(\boldsymbol{v}= P^{-1} T\boldsymbol{1}\), that is
\begin{equation*}
M = \begin{pmatrix}
    25& \phantom{-}0& -1& 20\\
    0& \phantom{-}5& \phantom{-}0& 0 \\
    0& -\frac{1}{2}& \phantom{-}5& 0\\
    0& \phantom{-}0& \phantom{-}1& 5
\end{pmatrix} \quad \text{and} \quad \boldsymbol{v} = \begin{pmatrix}
    1\\
    1\\
    0\\
    0
\end{pmatrix}.
\end{equation*}
\hfill $\blacktriangleleft$
\end{example}

\section{Algorithms to Compute Generators}
\label{sec:generators}
In \cref{sec:group} we gave  algorithms
to determine whether a given variety is the Zariski closure of a commutative  matrix group.  These procedures
can also be used to find a set of generators of such a group, by  quantifier elimination in the theory of algebraically closed fields. 
In this section, we describe specialised algorithms that can be directly implemented using standard computer-algebra software.  The cases we consider compute 
 a generator when the input variety is the Zariski closure  of a cyclic group. 
 These algorithms  rely  on Gr\"{o}bner-basis techniques.
 The first algorithm finds a semisimple generator, if one exists, while the second algorithm finds a generator in the general case.

Let   $I \subseteq \QQbar[\bm{x}]$ be an ideal and
denote by~$\sqrt{I}$ the \emph{radical} of~$I$, defined as  
    \begin{equation*}
\sqrt{I}\coloneqq  \{f\in \QQbar[\bm{x}] \mid f^n\in I \text{ for some } n\in \N\}.
    \end{equation*}
By Hilbert's Nullstellensatz, the ideal of all polynomials that vanish on \(V(I) \in \QQbar^d\) is \(\sqrt{I}\).
The ideal \(I\) is \emph{primary} if \(fg\in I\) with \(f,g\in \QQbar[\bm{x}]\) implies that \(f\in I\) or \(g^n\in I\) for some \(n\in\N,\)
and it is \emph{prime} if it satisfies the stronger condition that 
$ fg \in I$  only if $f \in I$ or $g \in I$.
Recall that the radical of a primary ideal is necessarily prime. 

A polynomial ideal~$I$ can be written as the intersection of primary ideals, giving the so-called \emph{primary decomposition} of~$I$. It is known that there exists a unique irredundant primary decomposition $I=\bigcap_{i=1}^\ell Q_i$, that is, a finite set $\{ Q_1, \ldots, Q_\ell \}$ of primary ideals  such that
\begin{itemize}
    \item the prime ideals $\sqrt{Q_i}$ are all distinct; and
    \item $\bigcap_{i \neq j} Q_i \not\subseteq  Q_j$ holds for all $j\in \{1,\ldots, \ell\}$.
\end{itemize} 
The prime ideals, the $\sqrt{Q_i}$'s, are called the \emph{associated primes} of $I$. 
 An associated prime~$\sqrt{Q}$ of the ideal~$I$ is
 called \emph{minimal} if it does not contain any other
 associated primes of $I$.  

Both algorithms take as input a variety $Z$, given as the zero set of an ideal $I\subseteq \QQ[X]$, $X = \{x_{i,j}\}_{1\leq i,j\leq d}$.
We will assume \emph{a priori} that \(Z\) is a commutative subgroup of $\GL_d(\QQbar)$.
Verifying that \(Z\) is commutative entails first checking that $Z$ is
closed under matrix multiplication (which implies closure under matrix inversion), which amounts to showing 
that \[ F(XY) \in \sqrt{I(X)+I(Y)}\] for all 
polynomials~$F$ in  $I$,
where $X = \{x_{i,j}\}_{1\leq i,j\leq d}$, $Y = \{y_{i,j}\}_{1\leq i,j\leq d}$. 
Commutativity is captured by showing that \[XY-YX \in \sqrt{I(X)+I(Y)}\,.\] 

\subsection{Semisimple Generator}

In the following we describe a procedure that, given an ideal 
   $I \subseteq \mathbb{Q}[X]$, 
   $X = \{x_{i,j}\}_{1\leq i,j\leq d}$,
   determines whether there exists a
semisimple matrix 
   $M\in \mathrm{GL}_d(\QQbar)$
   such that $I$ is the vanishing ideal of the 
    group $\langle M\rangle$
    and which moreover outputs such an $M$ in case the answer is "yes".
We show that if such an $M$ exists then it can be chosen such that its eigenvalues lie in the number field $\QQ(\zeta_q)$, where~$q$ is the number of minimal associated primes of~$I$ and $\zeta_q$ is a primitive $q$th root of unity.

\begin{figure*}[t!]
    \caption{A procedure for the group determination problem of cyclic groups, specific  to   semisimple  generators.}
    \Description{Table for the procedure.}
    \centering
\setlength{\extrarowheight}{5pt}
    \begin{tabularx}{0.9\linewidth}{ p{0.1\linewidth} p{0.8\linewidth} }
   \rowcolor{Gray}
    \multicolumn{2}{c}{\textbf{Cyclic Groups: Semisimple Generator}} \\
    \textbf{Input:} & An ideal 
   $I \subseteq \mathbb{Q}[X]$ generated by $F_1,\ldots,F_k \in \QQ[X]$, 
   $X = \{x_{i,j}\}_{1\leq i,j\leq d}$, with $q$ minimal associated primes, such that $V(I)$ is a commutative linear algebraic group.
    \\  \rowcolor{Gray}
\textbf{Output:} &  Determine whether there exists a
semisimple matrix 
   $M\in \mathrm{GL}_d(\QQbar)$
   such that $V(I)=\overline{\langle M\rangle}$. If "yes", output such a matrix~$M$. 
     \\
       \textbf{Line 1:} & Define the  ideal $J_0 \subseteq \QQ[P,X,Q]$ as follows
 \[J_0\coloneqq   \langle \,  F_1(PXQ), \, \ldots, \, F_k(PXQ),\,  PQ-\Id_d, \,  \{x_{i,j}\}_{i\neq j} \, \rangle \,\]
where $P= \{p_{i,j}\}_{1 \leq i,j \leq d}$,  $X = \{x_{i,j}\}_{1\leq i,j\leq d}$, and 
 $Q=\{q_{i,j}\}_{1\leq i,j \leq d}$.  \\
  \rowcolor{Gray} \textbf{Line 2:} & 
  Write  $J\coloneqq   \sqrt{J_0 \cap \QQ[X]}$. Compute  the unique irredundant primary decomposition $J = \bigcap_{s \in S} \mathcal{P}_s$. 
\\
    \textbf{Line 3:} &  Check whether all  primary components $\mathcal{P}_s$ of~$J$ are  binomials using Gr\"obner basis computation; \textbf{return} "no" if this test fails.  \\
 \rowcolor{Gray} \textbf{Line 4:} & 
 Let $\mathcal{P}_0$ be one of the primary components of~$J$ such that $\Id_d \in V(\mathcal{P}_0)$. \\
  \textbf{Line 5:} & Following~\Cref{prop:cyclic},  construct a rational diagonal  matrix  $D$ for $\mathcal{P}_0$, such that all the entries of $\sqrt[q]{D}$ lie in $\QQ[\zeta_q]$.
   \\
  & Write $ D_q\coloneqq   \sqrt[q]{D}$. 
   \\ 
 \rowcolor{Gray} \textbf{Line 6:} & 
Check whether for all $i\in \{1,\ldots, q-1\}$, the ideal $D_q^i \cdot \mathcal{P}_0$ is  a primary component of $J$; \textbf{return} "no" if this test fails.\\
 \textbf{Line 7:} & Write $I_q\coloneqq   \bigcap_{ 1\leq i\leq q} \; D_q^i \cdot \mathcal{P}_0$. \\
  \rowcolor{Gray}  \textbf{Line 8:}  & 
 Check whether $J = \bigcap_{ \sigma \in S_d} \; M_\sigma I_q M_\sigma^{-1}$ where $M_\sigma$ is the permutation matrix corresponding to $\sigma \in S_d$;  \textbf{return} "no" if this test fails. \\
      \textbf{Line 9:} &
    Define the ideal 
    $J_1\coloneqq \langle F_1(QD_qP),\ldots,F_k(QD_qP), PQ-I \rangle \cap \QQ[P]$. \\
    
      & Pick $\widetilde{P}\in V(J_1)$.
      \\ 
      \rowcolor{Gray} \textbf{Line 10:} & Check whether $I= \widetilde{P}I_q\widetilde{P}^{-1}$; 
    \textbf{return} "no" if this test fails. 
      \\
       \textbf{Return:}  & "yes" together with the matrix~$\widetilde{P}D_q\widetilde{P}^{-1}$. 
      \\ \hline
     \end{tabularx}
    \label{fig:semisimplegroup}
   \end{figure*}

Let the input ideal~$I$ be generated by a finite collection of polynomials $F_1,\ldots,F_k \in \QQ[X]$ with $q$ minimal associated primes.  Write $Z\coloneqq  V(I)$ for the zero locus of $I$, assumed to be a commutative linear algebraic group.
The general procedure of the algorithm is depicted in~\Cref{fig:semisimplegroup}. 
 The ideal~$J_0$ defined in \textbf{Line 1} is an ideal of the ring $\QQ[P,X,Q]$, where the relations $\{x_{i,j}\}_{i\neq j}$ and $PQ-\Id_d$ ensure that every point $(\widetilde{P},\widetilde{X},\widetilde{Q}) \in V(J_0)$ comprises a diagonal matrix~$\widetilde{X}$ and an invertible matrix~$\widetilde{P}$ with $\widetilde{P}^{-1}=\widetilde{Q}$ satisfying $\widetilde{P} \widetilde{X} \widetilde{P}^{-1}\in Z$. 
The aim is to find a single such point $(\widetilde{P},\widetilde{X},\widetilde{Q}) \in V(J_0)$ satisfying  \[Z=\overline{\langle \widetilde{P} \widetilde{X} \widetilde{P}^{-1}\rangle}\,.\] 
Subsequently, the  ideal $J$ defined in \textbf{Line 2}  
 contains all diagonal conjugates of each matrix in~$Z$.
 
 In particular, for each matrix~$M\in Z$
 not only does the diagonal matrix~$D$ satisfying $M=\widetilde{P} D\widetilde{P}^{-1}$ lie in  $V(J)$,
 but also all the diagonal matrices \(D\) for which %
 $M_{\sigma} D M_{\sigma}^{-1}$ and the permutation $ \sigma  \in S_d$ lie in $V(J)$. 

Due to this fact, we cannot simply employ~\Cref{prop:cyclic} to construct a generator for~$J$.
Instead, in \textbf{Line 4}, we  isolate a primary component~$\mathcal{P}_0$ of $J$ containing~$\Id_d$. 
In the following line,  we apply~\Cref{prop:cyclic} to the  binomial ideal~$\mathcal{P}_0$ and construct a diagonal matrix~$D$ such that $V(\mathcal{P}_0)=\overline{\langle D\rangle}$.
Since $V(\mathcal{P}_0)$ is irreducible, the matrix~$D$  can be chosen to have rational entries, 
from which it follows %
that the entries of  $\sqrt[q]{D}$  lie in $\QQ[\zeta_q]$.

The assertion in~\textbf{Line 6} verifies whether the orbit of $D_q$ cycles between the 
irreducible  components of~$V(J)$; this ensures that $V(I_q)=\overline{\langle D_q\rangle}$ is included in  $V(J)$, where $I_q$ is defined in~\textbf{Line 7}. Next, our procedure checks whether $J$ is equal to the intersection of $M_{\sigma}I_q M_{\sigma}^{-1}$. The necessity of the latter test is due to the  above-mentioned fact that $V(J)$ 
 contains all diagonal conjugates of each matrix in $Z$; see~\Cref{example-semi}. The rest of the algorithm is straightforward.

\begin{example}
\label{example-semi}
Let \(F_1\coloneqq   2z+w\), \(F_2\coloneqq   2x-2y+3w\), and \(F_3\coloneqq   4y^2-4yw+w^2-4y+4w\).
Consider the following ideal as an input to the procedure in~\Cref{fig:semisimplegroup}:
\begin{equation*}
I \coloneqq   \langle F_1, F_2, F_3 \rangle \subseteq \QQ\left[\begin{pmatrix}
    x & z\\
    w & y
\end{pmatrix}\right].
\end{equation*}
The ideal $I$ is prime (meaning that $q=1$) and  $V(I)$ is a  commutative linear algebraic group. 
The output of our procedure shows that there exists $M$ such that $V(I)=\overline{\langle M\rangle}$, and such  that the eigenvalues of $M$ lie in $\QQ$.

Following the algorithms, the ideal $J$ defined in \textbf{Line 2} has two  primary components  
\[\mathcal{P} := \langle w,z,y^2-x \rangle \qquad \text{and} \qquad \mathcal{P}' := \langle  w,z,x^2-y \rangle.\] 
Since $I_2 \in V(\mathcal{P} \cap \mathcal{P}')$, we can pick any of these  ideals as $\mathcal{P}_0$ in \textbf{Line 4}.  Following~\Cref{prop:cyclic} in \textbf{Line 5}, we may construct 
diagonal matrices $D = \Delta(4,2)$ and $ D' = \Delta(2,4)$
such that 
\[V(\mathcal{P})=\overline{\langle \Delta(4,2)\rangle} \qquad \text{and} \qquad  V(\mathcal{P'})=\overline{\langle \Delta(2,4)\rangle}.\]
Clearly, matrices $D$ and $D'$ are conjugates under  permutation of diagonals, implying that the assertion in \textbf{Line 8} holds.
 (The above is an indication  (1) that permutations of matrices arising from one  choice of $\mathcal{P}_0$   under $M_{\sigma}$ are suitable for other possible choices of  $\mathcal{P}_0$, and (2) the necessity of the check in \textbf{Line 8}).
Following \textbf{Line 9} for $D_q= \Delta(2,4)$, defining the ideal 
\begin{equation*}
    J_1 := \langle F_1(Q D_q P), \,F_2(Q D_q P), \, F_3(Q D_q P) \rangle \cap \QQ[P],
\end{equation*}
we have that  
    \begin{equation*}
        J_1 = \langle p_3 - 2p_4, p_1 - p_2  \rangle \subseteq \QQ\left[\begin{pmatrix}
    p_1 & p_2\\
    p_3 & p_4
\end{pmatrix}\right].
    \end{equation*}
Subsequently, one choice for a  semisimple generator of $V(I)$  is the following matrix~$M$: 
\[
M := \begin{pmatrix}
    \phantom{-}6 & 2\\
    -4 & 0
\end{pmatrix} = \begin{pmatrix}
    \phantom{-}1 & \phantom{-}1\\
    -2& -1
\end{pmatrix} \begin{pmatrix}
    2 &0\\
    0 & 4
\end{pmatrix} \begin{pmatrix}
    \phantom{-}1 & \phantom{-}1\\
    -2  &-1
\end{pmatrix}^{-1}.
\]
\hfill $\blacktriangleleft$
\end{example}

\subsection{General  Generator}
We employ the algorithm from the previous subsection to provide a procedure that, given an 
   algebraic set $Z \subseteq \mathrm{GL}_d(\QQbar)$ determines whether
   there exists a matrix 
   $M\in \mathrm{GL}_d(\QQbar)$
   such that $Z=\overline{\langle M\rangle}$
    and which moreover outputs such an $M$ in the affirmative case.

Let the input ideal~$I$ be generated by a finite  collection of polynomials $F_1,\ldots,F_k \in \QQ[X]$ with $q$ minimal associated primes.  Write $Z\coloneqq  V(I)$ for the zero locus of $I$ that is, by assumption, a commutative linear algebraic group.
Our algorithm first calls a (modified variant of) the procedure in~\Cref{fig:semisimplegroup}, with the input ideal~$I$,
to check whether the subgroup~$G_s$ of all semisimple matrices in $Z$
is one-generated. 
The modification is as follows:
(1) the assertion  in \textbf{Line 10} is omitted (as this assertion requires that $Z$ is generated with a single semisimple matrix), and (2) the algorithm outputs  $D_q$ and the ideal $J_1$ defining the locus point of a suitable~$P$. 

After the  above subprocedure returns, our algorithm proceeds by verifying that  the  subgroup~$G_u$ of all unipotent matrices is one-generated. For this purpose, it checks  
  \begin{itemize}
  \item whether $V(I + \langle   (X-\Id_d)^n  \rangle  )$ is a commutative linear algebraic group; and 
      \item  whether $V(I + \langle   (X-\Id_d)^n  \rangle  )$ is  one-dimensional.
  \end{itemize} 

The algorithm returns~"no" if either of the subgroups $G_s$ or $G_u$
is not one-generated. Otherwise, the procedure defines
the ideal $H\subseteq \QQ[P,X,Q]$ by 
\[ H\coloneqq   \langle \,  F_1(PXQ), \, \ldots, \, F_k(PXQ),\,  PQ-\Id_d, \,  \{x_{i,j}\}_{j\neq i,i+1} \, \rangle\]
 where \[P= (p_{i,j})_{1 \leq i,j \leq d}, \qquad X = \{x_{i,j}\}_{1\leq i,j\leq d} \qquad \text{and } \qquad 
 Q=(q_{i,j})_{1\leq i,j \leq d}\,.\]
It returns "yes"  together with the matrix~$\widetilde{P}D_q\widetilde{X}\widetilde{P}^{-1}$ where $(\widetilde{P},\widetilde{X}) \in J+ H$.

\begin{acks}
Mahsa Shirmohammadi and Rida Ait El Manssour were supported by the
ANR grant VeSyAM (ANR-22-CE48-0005).
James Worrell was supported by EPSRC Fellowship EP/X033813/1. \newline
\includegraphics[height=1em]{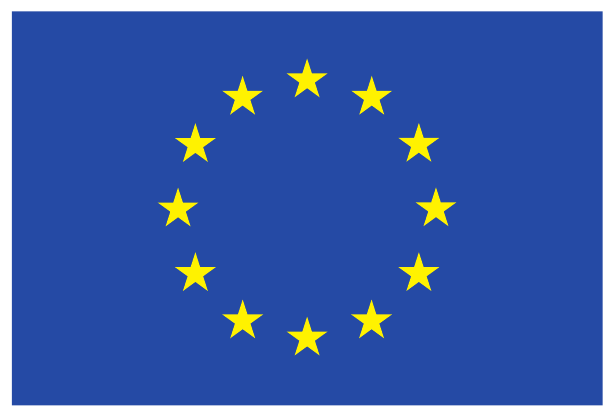}  This paper is part of a project that has received funding from the
European Research Council (ERC) under the European Union's Horizon 2020
research and innovation program (grant agreement No.~10103444).
Anton Varonka gratefully acknowledges the support of the ERC consolidator
grant ARTIST 101002685.

\end{acks}

\printbibliography%

\newpage

\appendix %
\section{Extended Background}
Here we give an extended preliminary section, which collects together background material and definitions relevant to the paper.
\label{app:preliminaries}
\subsection{Ideals and Varieties}

Let \(\QQbar\) denote the field of algebraic numbers and write $\QQbar[x_1,\ldots, x_d]$ for the
ring of polynomials with coefficients in $\QQbar$ over the variables~$x_1,\ldots,x_d$.  
A polynomial ideal~$I$ is an additive subgroup of~$\QQbar[x_1,\ldots, x_d]$
that is closed under multiplication in $\QQbar[x_1,\ldots, x_d]$.  
Given a finite collection of polynomials \(S\subseteq \QQbar[x_1,\ldots, x_d]\), we denote by \(\langle S\rangle\) the ideal generated by \(S\).

We briefly describe useful algebraic operations for ideals.
The \emph{sum} of \(I\) and \(J\), which we denote by \(I+J\), is the ideal 
    \begin{equation*}
        I + J \coloneqq \{f+g : f\in I \text{ and } g\in J\}.
    \end{equation*}
The sum \(I+J\) is the smallest ideal containing both \(I\) and \(J\) and, in addition, if \(I = \langle f_1, \ldots, f_r\rangle\) and \(J = \langle g_1, \ldots, g_s\rangle\), then \(I+J = \langle f_1,\ldots, f_r, g_1,\ldots, g_s\rangle\).
The \emph{product} of \(I\) and \(J\), which we denote by \(IJ\), is the ideal given by
    \begin{equation*}
        I J = \left\{ \textstyle \sum_{\ell=1}^r f_\ell g_\ell : f_1, \ldots, f_r\in I, g_1,\ldots, g_r\in J, r\in\N \right\}.
    \end{equation*}
The \emph{intersection} \(I \cap J\) is the set of polynomials common to both \(I\) and \(J\).
Given an ideal $I\subseteq \QQbar[X]$, where $X=\{x_{i,j}\}_{1\leq i,j\leq d}$, and matrix $M \in \QQbar^{d\times d}$, we write 
$M\cdot I$ for the ideal $\{f(MX) \in \QQbar[X] : f \in I\}$.

An \emph{algebraic set} (or \emph{variety}) is the set of common zeroes of a finite collection of polynomials.
By Hilbert's basis theorem every polynomial ideal \(I \subseteq \QQbar[x_1,\ldots, x_d]\) is finitely generated.
Thus the set 
    \begin{equation*}
        V(I) \coloneqq   \{\bm{a}\in\QQbar^d : f(\bm{a})=0 \text{ for all } f\in I\} 
    \end{equation*}
is a variety.
As an aside, the varieties \(V(I+J)\) and \(V(IJ)\) are easily shown to be equal to \(V(I) \cap V(J)\) and \(V(I)\cup V(J)\), respectively.
Taking, as above, \(M\in\QQbar^{d\times d}\), we have 
    \begin{equation*}
        V(M\cdot I)=\{ A \in \QQbar^{d\times d}: MA \in V(I) \}.
    \end{equation*}

Let   $I \subseteq \QQbar[x_1,\ldots, x_d]$ be an ideal and
denote by~$\sqrt{I}$ the \emph{radical} of~$I$, defined as  
    \begin{equation*}
\sqrt{I}\coloneqq  \{f\in \QQbar[x_1,\ldots,x_d ] \mid f^n\in I \text{ for some } n\in \N\}.
    \end{equation*}
By Hilbert's Nullstellensatz, the ideal of all polynomials that vanish on \(V(I) \in \QQbar^d\) is \(\sqrt{I}\).

A polynomial ideal \(I\) is \emph{principal} if there exists a polynomial \(f\) for which \(\langle f \rangle =I\).
The ideal $I$ is  \emph{primary}  if for all $f, g \in \QQbar[x_1, \ldots, x_d]$, if $ fg \in I$ then $f \in I$ or $g^n \in I$ for some $n \in \N$, and it is \emph{prime} if it satisfies the stronger condition that 
$ fg \in I$  only if $f \in I$ or $g \in I$.
Recall that the radical of a primary ideal is necessarily prime. 

A polynomial ideal~$I$ can be written as the intersection of primary ideals, giving the so-called \emph{primary decomposition} of~$I$. It is known that there exists a unique irredundant primary decomposition $I=\bigcap_{i=1}^\ell Q_i$, that is, a finite set $\{ Q_1, \ldots, Q_\ell \}$ of primary ideals  such that
(1) the prime ideals $\sqrt{Q_i}$ are all distinct; and   (2)  $\bigcap_{i \neq j} Q_i \not\subseteq  Q_j$ holds for all $j\in \{1,\ldots, \ell\}$.
The prime ideals, the $\sqrt{Q_i}$'s, are called the \emph{associated primes} of $I$. 
 An associated prime~$\sqrt{Q}$ of the ideal~$I$ is
 called \emph{minimal} if it does not contain any other
 associated primes of $I$.

Let \(k\) be a field.
Consider a system of \(m\) equations in \(d\) variables \(x_1,\ldots, x_d\) with coefficients in \(k\):
    \begin{align*}
        a_{11} x_1 + a_{12} x_2 + \cdots + a_{1d} x_d &= b_1 \\ 
        a_{21} x_1 + a_{22} x_2 + \cdots + a_{2d} x_d &= b_2 \\
         \shortvdotswithin{=} 
         a_{m1}x_1 + a_{m2} x_2 +\cdots + a_{md} x_d &= b_m.
    \end{align*}
We call the solution set for such a system a \emph{linear variety}.
The \emph{dimension} of a non-empty linear variety  is the number of independent equations in its definition.

\subsection{The Zariski Topology}

The \emph{Zariski topology} on 
 \(\QQbar^d\) has as its closed sets 
 the varieties in \(\QQbar^d\).
Given a set~\(E\subseteq \QQbar^d\), we denote by \(\overline{E}\) the closure of \(E\) in the Zariski topology, i.e., 
the smallest algebraic set that contains \(E\). 
A closed set $A \subseteq \QQbar^d$ is \emph{irreducible} if it cannot be written as the union of two closed proper subsets.  A maximal irreducible closed subset of $A$ is called 
an \emph{irreducible component} of $A$.
By Hilbert's basis theorem every closed set $A$ can be written as a finite union of its irreducible components.

Given a closed set~$E$, denote by $\dim E$ the dimension of the variety~$E$, that is, 
the maximal length of a strictly decreasing  chain of nonempty irreducible subvarieties of $E$. We define the dimension of an arbitrary set to be the dimension of
its closure. 

\subsection{Linear Algebraic Groups}
The general linear group $\GL_d(\mathbb{F})$ is the group of all $d\times d$ invertible matrices with entries in a given field~$\mathbb{F}$.
The orbit closure of $\bm{v} \in \mathbb{F}^d$, denoted by $\overline{G \cdot \bm{v}}$, is
the closure of $G\cdot \bm{v}$ in the Zariski topology.
Recall that 
the closed sets of the Zariski topology are algebraic sets--that is, sets of common zeros of a finite collection of polynomials.

Recall
that a matrix \(M\in \QQbar^{d\times d}\) is  
\emph{nilpotent} if \(M^n=0\) for some $n\in \NN$. 
It is
\emph{unipotent} if \(M -\textrm{Id}_d\) is nilpotent, and 
\emph{semisimple} if it is diagonalisable over~\(\QQbar\).
A matrix \(M\in \QQbar^{d\times d}\) is called 
 \emph{upper triangular}  if all entries below the main diagonal are zero.
We use the term \emph{upper unitriangular} to refer to an upper triangular matrix whose entries along the main diagonal are all ones.

Write  $\GL_d(\QQbar)$
for the group of  $d\times d$ invertible matrices with entries in $\QQbar$.
We identify $\mathrm{GL}_d(\QQbar)$ with the variety
\(\bigl\{(M,y)\in \QQbar^{d\times d}\times \QQbar: \det(M)\cdot y=1\bigr\}\).
Under this identification, matrix multiplication is a polynomial map $\GL_d(\QQbar)\times \GL_d(\QQbar) \rightarrow \GL_d(\QQbar)$, and, by Cramer's rule, matrix inversion is also a polynomial map $\GL_d(\QQbar) \rightarrow \GL_d(\QQbar)$. 
A \emph{linear algebraic group} $G$ is a Zariski-closed subgroup of $\GL_d(\QQbar)$.
Given a linear algebraic group $G$, it is well-known that it has a unique irreducible component that contains the identity matrix, which we denote by~$G^{\circ}$. 
The cosets of $G^{\circ}$ are the irreducible components of~$G$. 

We say that $G$ is \emph{topologically generated by}
$S\subseteq \GL_d(\QQbar)$ if $G$ is the smallest Zariski closed subgroup of 
$\GL_d(\QQbar)$ that contains $S$, that is, $G=\overline{\langle S\rangle}$.
If $G$ is topologically generated by a set with $s$ elements then we say that is
$G$ is \emph{$s$-generated}. 
Denote by  $G_s$  the subset of semisimple  matrices in $G$, and by $G_u$ the subset of unipotent matrices.
If $G$ is a commutative algebraic group then  $G_s$ and $G_u$ form  algebraic subgroups; moreover we have $G=G_u \cdot G_s$.

The $d$-dimensional multiplicative group 
over $\QQbar$ is defined by
 \begin{gather*}
 \mathbb{G}_m^d = \mathbb{G}_m^d(\QQbar) \coloneqq   \left\{\boldsymbol{a} \in \QQbar^{d} : a_1\cdots a_d
   \neq 0 \right\}.
 \end{gather*}
   Here the subscript $m$ stands for \emph{multiplicative}.
Evidently this is a commutative group with respect to pointwise multiplication.   

We identify $\mathbb{G}_m^d$ with the subgroup of diagonal matrices in $\GL_d(\QQbar)$ via the map $\Delta$
that sends $(a_1,\ldots,a_d) \in \mathbb{G}_m^d$
to the diagonal matrix $\Delta(a_1,\ldots,a_d) \in \GL_d(\QQbar)$.
   
 Given a subgroup $\Lambda \subseteq \mathbb{Z}^d$, define
 \[ H_\Lambda \coloneqq   \{ \boldsymbol{a} \in \mathbb{G}_m^d : \forall
   \boldsymbol{v}\in \Lambda\, (a_1^{v_1}\cdots a_d^{v_d}=1) \}.  \]
 The map $\Lambda \mapsto H_{\Lambda}$ is an isomorphism
 between subgroups of $\mathbb{Z}^d$ and algebraic subgroups
 of $\mathbb{G}_m^d$.  
 This implies that $\mathbb{G}_m^d$ is topologically generated by any $d$-tuple $(g_1,\ldots,g_d)$ of multiplicatively
independent elements of $\QQbar$.
It also follows 
 that the vanishing ideal 
$I\subseteq \QQbar[x_1,\ldots,x_d]$ of an algebraic subgroup of 
$\mathbb{G}_m^d$ is a so-called \emph{pure binomial ideal};
that is, an ideal generated by polynomials of the form $x_1^{\alpha_1}\cdots x_d^{\alpha_d}-x_1^{\beta_1}\cdots x_d^{\beta_d}$, where $\alpha_1,\ldots,\alpha_d$ and 
$\beta_1,\ldots,\beta_d$ are non-negative integers.  
A mere \emph{binomial ideal} is one that is generated by polynomials of the form
$x_1^{\alpha_1}\cdots x_d^{\alpha_d}-\lambda x_1^{\beta_1}\cdots x_d^{\beta_d}$, where $\lambda \in \QQbar$.

For a $d\times d$ unipotent matrix $A$ and  nilpotent matrix $B$, define 
\begin{equation*} \log(A) \coloneqq  \sum_{k=1}^{d-1} (-1)^{k+1}\frac{(A-I)^k}{k} 
\quad \text{and} \quad
\exp(B) \coloneqq   \sum_{k=0}^{d-1} \frac{B^k}{k!}.
\end{equation*}
Let \(G\subseteq \GL_d(\QQbar)\) be a commutative subgroup of unipotent matrices.
Recall that \(L\coloneqq   \{\log(A) : A\in G\}\) is a linear subspace of 
$\QQbar^{d^2}$
consisting of nilpotent matrices \cite[Chapter II, Section 7.3]{borel1991}.

\subsection{Lattices}
The \emph{rank} of an abelian group \(\Lambda\) is the size of a maximal linearly independent subset~\cite{Lang2002}. 
A subgroup \(\Lambda\subseteq \Z^d\)   is called a \emph{lattice} (and has rank at most~$d$).
The \emph{torsion subgroup} of~$\ZZ^d / \Lambda$ is the subgroup of $\ZZ^d / \Lambda$ consisting of all elements of finite order.

\subsection{Modules over Principal Ideal Domains}

Let \(R\) be a commutative ring.  A \emph{module} \(M\) is an additive abelian group together with an operation \(R\times M \to M\) such that for all \(r,s\in R\) and \(x,y \in M\) we have
    \begin{equation*}
        (r+s)x = rx + sx \quad \text{and} \quad r(x+y) = rx + ry.
    \end{equation*}
By definition of an operation, we have \(1x=x\) for all \(x\in M\).
A module over \(\Z\) is abelian group and, vice versa, an abelian group is a module over \(\Z\).

Let \(M\) be a module over a ring \(R\) and let \(S\subseteq M\).
The set \(S\) is a \emph{basis} of \(M\) if \(S\) is non-empty, \(S\) generates \(M\), and the elements of \(S\) are linearly independent.
A \emph{free module} is a module which admits a basis, or the zero module.
If \(S\) is a basis of a non-zero module \(M\), then every element of \(M\) can be uniquely expressed as a linear combination of elements of \(S\).

An \emph{integral domain} is a non-zero commutative ring where the product of two non-zero elements is itself non-zero.
A \emph{principal ideal domain} is an integral domain wherein every ideal is principal.  Examples of principal ideal domains include \(\Z\) and the rings \(\mathbb{F}[x]\), the univariate polynomials with coefficients in the field \(\mathbb{F}\).

 Broadly speaking, a finitely generated module over a principal ideal domain is given by a direct sum of cyclic modules (i.e., modules with a single generator).
 One formulation of this structural result is the Elementary Divisors Theorem, see, for example,~\cite{Lang2002}.
\begin{theorem}[Elementary Divisors Theorem]
    Let \(F\) be a free module over a principal ideal domain \(R\) and \(M\) a non-trivial finitely generated submodule.
    Then there exists a basis of \(F\), elements \(e_1, \ldots, e_m\) in this basis, and non-zero elements \(a_1,\ldots, a_m\in R\) such that
    \begin{enumerate}
        \item \(a_1e_1,\ldots, a_m e_m\) form a basis of \(M\) over \(R\), and
        \item \(a_i \mid a_{i+1}\) for \(i=1,\ldots, m-1\).
    \end{enumerate}
\end{theorem}

\section{Proofs Omitted from \texorpdfstring{\cref{sec:preliminaries}}{Section 2}}
\label{app:proofs}

Recall that the family of \emph{constructible sets} is the smallest class that contains the algebraic sets and is also closed under boolean operations~\cite[Chapter 1]{basu2006algorithms} and that a \emph{morphism} is a map that is given locally by polynomials.
\factEucZar*
\cref{fact:EucZar} follows from the observation that the orbit \(G\cdot \bm{v}\) is the image of a variety under a morphism and is thus a constructible set by Chevalley's Theorem (see~\cite[Chapter AG, Corollary 10.2]{borel1991}).
In our setting, the Zariski and Euclidean closures of a constructible set coincide~\cite[Chapter 1, \S10, Corollary 1]{Mumford1999}.

\propgen*
\begin{proof}
    Let $U$ be the group $ \overline{ \langle G_u \rangle}$. 
    It is shown in~\cite[Proof of Lemma 6]{NPSHW2021} that $U$ is a normal subgroup of~$G$, and 
   that it is topologically generated by \(\dim U\) elements.

      Recall that  the quotient of a linear algebraic group  by a normal
subgroup is itself isomorphic to a  linear algebraic group (possibly in higher dimension~\cite[Section 11.5]{HumphreysLAG}). 
By construction, the quotient $G/U$ is a linear algebraic group that consists only of semisimple elements. 
    Therefore, $(G/U)^{\circ} = G^{\circ}/U$ is a torus and by \cite[Proposition 14]{galuppi2021toric} it is $1$-generated. 
    
Let $h \in G^{\circ}$ be such that $G^{\circ}/U = \overline{\langle h \rangle}$, and define $H := \overline{ \langle h, U \rangle} \subseteq G^{\circ}$. Then $G^\circ /U = H / U$ and hence  $\dim H= \dim G^\circ$. Since $G^\circ$ is irreducible we have $G^\circ = H$; hence $G^{\circ}$ is topologically generated by at most $\dim U + 1$ elements. 
        
        In order to topologically generate $G$, it is sufficient to take the topological generators of $G^\circ$ and one element from every other irreducible component of $G$. Hence, $G$ is topologically generated by  $\dim U + | G/G^\circ|$ elements.
\end{proof}

\propcyclic*
\begin{proof}
Let $\Lambda \subseteq \ZZ^d$ have rank $r$ and elementary divisors $d_1,\ldots,d_r$, where $d_i \mid d_{i+1}$ for all $i\in \{1,\ldots,r-1\}$. 
Write $s_0$ for the number of non-unit elementary divisors.
Hence, 
there is a basis $\boldsymbol{u}_1,\ldots,\boldsymbol{u}_d$ of $\mathbb{Z}^d$ such that $\Lambda$ is generated by the vectors
 $d_1 \boldsymbol{u}_1 , \ldots, d_r \boldsymbol{u}_r$.
Then the torsion subgroup of 
$\mathbb{Z}^d/\Lambda$ is 
$\mathbb{Z}/d_1\mathbb{Z}\times \cdots 
\times \mathbb{Z}/d_r\mathbb{Z}$, which is $s_0$-generated.  This shows Item~1.

The map
 $\varphi \colon \mathbb{G}_m^d \rightarrow \mathbb{G}_m^d$, defined by
 $\varphi(\boldsymbol{a}) = (\boldsymbol{a}^{\boldsymbol{u}_1},\ldots,\boldsymbol{a}^{\boldsymbol{u}_d})$
is a Zariski-continuous group automorphism of $\mathbb{G}_m^d$ that maps $H_\Lambda$
to the group 
\( G\coloneqq  \Omega_{d_1}\times \cdots \times \Omega_{d_r} \times \mathbb{G}_m^{d-r}\), 
where $\Omega_k$ denotes the group of all $k$-th roots of unity for $k$ a positive integer.  Clearly $H_\Lambda$ is $s$-generated if and only if $G$ is $s$-generated.

Write $F\coloneqq  \Omega_{d_1}\times \cdots \times \Omega_{d_r}$, so  $G=F\times \mathbb{G}_m^{d-r}$ and $F$ is $s_0$-generated.
Hence if $r=d$ then $G$ is also $s_0$-generated, while if $r<d$ then $G$ is $s_0$-generated unless $s_0=0$, in which case $G$ is 1-generated.  
\end{proof}

\end{document}